\documentclass[a4paper,11pt]{article}

\usepackage[utf8]{inputenc}
\usepackage[T1]{fontenc}
\usepackage[english]{babel}
\usepackage{geometry}
\usepackage{amsmath,amssymb,amsfonts,amsthm,bm}
\usepackage{graphicx}
\usepackage{placeins}
\usepackage{xcolor}
\definecolor{linkblue}{RGB}{0,0,160}
\usepackage[colorlinks=true, linkcolor=linkblue, citecolor=linkblue, urlcolor=linkblue]{hyperref}
\usepackage{cleveref}
\usepackage{siunitx}
\usepackage[font=small, labelfont=bf]{caption}
\usepackage{booktabs}
\usepackage{tabularx}
\usepackage{array}
\usepackage{microtype}
\usepackage{algorithm}
\usepackage{algpseudocode}
\usepackage{orcidlink}
\usepackage{csquotes}
\usepackage{subcaption}
\usepackage{enumitem}
\usepackage{float}
\usepackage[en-GB]{datetime2}
\usepackage[
  style=numeric,
  sorting=none,
  maxcitenames=2,
  date=year,
  doi=true,
  isbn=false,
  url=false,
  eprint=false,
  giveninits=true
]{biblatex}

\DeclareNameAlias{author}{family-given}
\DeclareFieldFormat[article]{title}{#1}
\DeclareFieldFormat[article]{journaltitle}{\textit{#1}}
\DeclareFieldFormat[article]{volume}{\textbf{#1}}
\DeclareFieldFormat[article]{number}{(#1)}
\DeclareFieldFormat[article]{pages}{#1}
\renewbibmacro{in:}{}

\newtheorem{proposition}{Proposition}[section]
\newtheorem{definition}[proposition]{Definition}

\crefname{figure}{Fig.}{Figs.}
\crefname{table}{Tab.}{Tabs.}
\crefname{equation}{Eq.}{Eqs.}
\crefname{definition}{Definition}{Definitions}
\crefname{proposition}{Proposition}{Propositions}

\begin{document}

\title{Local Information Operators for Spatial Identifiability\\
in Distributed-Parameter Inverse Problems\\
in Computational Mechanics}

\author{
Tammam Bakeer\orcidlink{0000-0001-6475-3907}
\thanks{Independent Researcher, Dresden, Germany. Corresponding author. Email: \href{mailto:mail@bakeer.de}{mail@bakeer.de}.}
}

\date{13 May 2026}
\maketitle

\begin{abstract}
In distributed-parameter inverse problems in computational mechanics, spatially varying fields are inferred from noisy, indirect, and heterogeneous observations. The relevant identifiability question concerns which spatial perturbation patterns of the field are distinguishable under a specified sensing and excitation programme. This paper develops a local information-operator framework for this purpose. Around a nominal parameter field, the parameter-to-observation map is linearized and the likelihood contribution to posterior precision is interpreted as an operator on parameter-field perturbations. For locally linearized Gaussian models with parameter-independent covariance, this operator is equivalently Fisher information, Gauss-Newton data-misfit curvature, and a noise-weighted sensitivity Gramian.

The framework separates pointwise visibility from spatial identifiability. The diagonal gives a coordinate-dependent local information density, while the full kernel and metric- or prior-preconditioned spectra rank spatial patterns that are strongly visible, weakly visible, or locally invisible. Heterogeneous observation blocks are assembled in a common parameter space; information is additive only under conditional independence, whereas correlated errors require the full joint covariance. Model discrepancy, nuisance parameters, and prior information modify the same geometry through covariance inflation, Schur-complement information loss, and prior-preconditioned modes. Examples cover analytic beam kernels, two-span support coupling, static-dynamic fusion for flexural-rigidity identification, and two-dimensional damage-field reconstruction in a leading information subspace. The operator view supports interpretation of identifiability, sensor complementarity, and reduced reconstruction in distributed-parameter inverse problems.
\end{abstract}

\noindent \textbf{Keywords:} Bayesian inverse problems; distributed-parameter inverse problems; 
spatial identifiability; local information operators; Fisher information; 
likelihood-informed subspaces; parameter observability.

\newpage
\section{Introduction}
\label{sec:introduction}

Inverse problems in computational mechanics increasingly require the inference of
spatially distributed parameter fields from indirect, noisy, and heterogeneous
observations. These fields may describe stiffness or compliance, damage, corrosion,
section loss, fracture resistance, boundary or support degradation, prestress loss,
scour, uncertain loading, or environmental exposure. They are central to reliability
assessment, load rating, remaining-life estimation, digital-twin updating, and
maintenance decision-making. In such problems, however, the essential question is not
only which field fits the data, but also which perturbation patterns of that field are
distinguishable by the available observations. A sensing programme may strongly
constrain some spatial patterns, weakly constrain others, and leave some directions
effectively invisible. \emph{Spatial identifiability} therefore depends on the combined action
of the mechanics, excitation, sensor layout, observation type, parameterization, prior
information, and noise model.

Recent structural and mechanics-oriented studies demonstrate the practical relevance of
distributed-parameter inference, including distributed stiffness updating, bridge
flexural-rigidity estimation, digital-twin updating, and sensor-informed structural
identification~\cite{adhikariDistributedParameterModel2010,lohnerHighFidelityDigitalTwins2023,
airaudoAdjointbasedDeterminationWeaknesses2023,youDistributedBendingStiffness2023,
bakeerSensorInformativenessIdentifiability2025a,diazMergingExperimentalDesign2023}.
They also show that the inferred fields can depend strongly on sensing geometry,
excitation richness, parameter-field representation, and uncertainty modelling.
Classical inverse-problem theory explains why this dependence is not accidental:
ill-posedness is intrinsic to indirect field inference rather than
exceptional~\cite{beckParameterEstimationEngineering1977,tarantolaInverseProblemTheory2005,stuartInverseProblemsBayesian2010}.
This situation is illustrated conceptually in
\cref{fig:ill_posedness_landscape}: the data misfit may be small along an extended
valley, so several parameter fields can fit the observations comparably well while
differing along weakly constrained directions.

\begin{figure}[!htbp]
\centering
\includegraphics[width=0.7\textwidth]{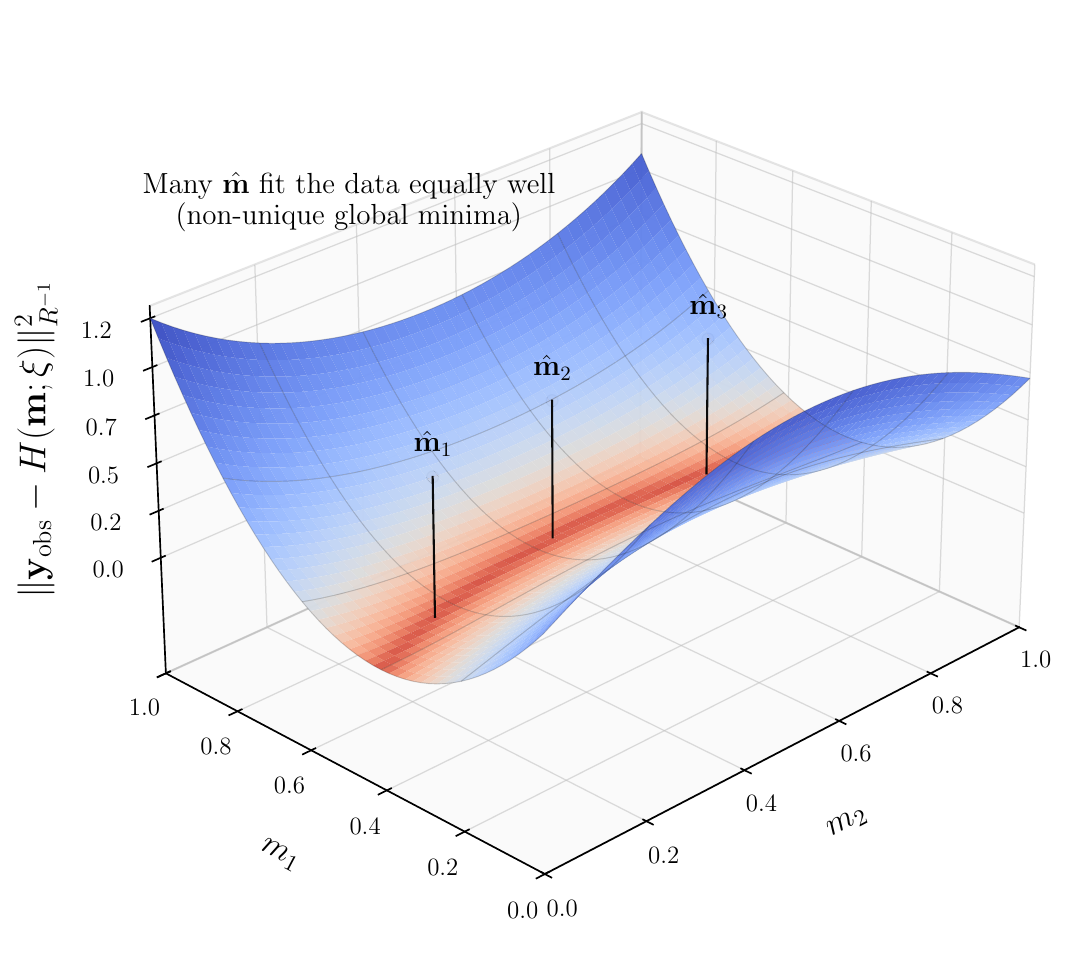}
\caption{Conceptual loss landscape for an ill-posed inverse problem. The valley of small weighted residual norm contains multiple estimates \(\hat m_1,\hat m_2,\hat m_3\) with comparable data fit, illustrating that the observations may constrain only some parameter combinations while leaving other directions weakly constrained or nearly non-identifiable.}
\label{fig:ill_posedness_landscape}
\end{figure}

Several established lines of work address parts of this problem. In structural dynamics
and model updating, information-theoretic and Fisher-information-based sensor-placement
methods use entropy, determinant-type criteria, and related scalar utilities to rank
sensor layouts or measurement
channels~\cite{kammerSensorPlacementOnorbit1991,sanayeiSensorPlacementParameter2002,
papadimitriouEntropyBasedOptimalSensor2000,papadimitriouOptimalSensorPlacement2004,
yiMethodologyDevelopmentsSensor2012,barthorpeEmergingTrendsOptimal2020,
ebrahimianInformationTheoreticApproachIdentifiability2019,ercanBayesianOptimalSensor2023,
antilFisherInformationBasedSensorPlacement2026}.
Distributed-parameter systems have likewise used sensitivity and Fisher-information
ideas for experiment and sensor-location
selection~\cite{vandewouwerApproachSelectionOptimal2000}. In parallel, Bayesian optimal
experimental design has developed expected-information-gain and posterior-uncertainty
criteria, together with scalable algorithms for partial-differential-equation
(PDE)-constrained inverse problems~\cite{huanSimulationbasedOptimalBayesian2013,
alexanderianAOptimalDesignExperiments2014,alexanderianOptimalExperimentalDesign2021,
ryanReviewModernComputational2016,huanOptimalExperimentalDesign2024}. Large-scale Bayesian inverse problems have also
shown that the data-informed geometry is often effectively low-dimensional, as expressed
by low-rank misfit Hessians, likelihood-informed subspaces, and low-rank posterior
corrections~\cite{flathFastAlgorithmsBayesian2011,
bui-thanhComputationalFrameworkInfiniteDimensional2013,
cuiLikelihoodinformedDimensionReduction2014,spantiniOptimalLowrankApproximations2015,
scheffelsLikelihoodinformedModelReduction2025}.
Systems and estimation theory provide a related Gramian language for observability and
estimability~\cite{moorePrincipalComponentAnalysis1981,scherpenBalancingNonlinearSystems1993,
lallEmpiricalModelReduction1999,himpeEmgrEmpiricalGramian2018,
jauffretObservabilityFisherInformation2007a,lewisRoleObservabilityGramian2022,
kunwooObservabilityGramianBayesian2023}.

These developments are commonly used to approximate posteriors, optimize scalar design
criteria, assess practical identifiability, or identify data-informed dimensions in
abstract parameter spaces. In mechanics applications, however, the spatial diagnostic
interpretation of the full local information operator remains less explicit. Existing
scalar design criteria and diagonal information measures are useful, but they do not by
themselves reveal which stiffness, compliance, damage, boundary, or load-field patterns
are visible, weakly visible, redundant, complementary, or confounded under a specified
observation programme. Scalar criteria such as trace, determinant, entropy, or minimum
eigenvalue compress the information geometry into a single number. They therefore do not
show whether a new observation fills an existing information gap, reinforces already
informed patterns, or leaves important weak directions unchanged.

The main viewpoint of this paper is that local spatial identifiability in
distributed-parameter mechanics is an \emph{operator-level property}. Around a nominal
parameter field \(m_0\), the parameter-to-observation map is linearized, and the
likelihood contribution to posterior precision defines an information-weighted
bilinear form on parameter-field perturbations. For the locally linearized Gaussian
model with parameter-independent covariance, this quadratic form can be read as local
Fisher information, Gauss--Newton data-misfit curvature, or a noise-weighted
parameter-output sensitivity Gramian. The paper uses this equivalence to develop a
mechanics-oriented spatial diagnostic for distributed-parameter identifiability.
The term observability is used here in a parameter-identification sense: it refers to
the local distinguishability of parameter-field perturbations through the
parameter-to-observation map, not to classical state observability in control theory.

The emphasis is placed on the full operator rather than only on scalar criteria or
diagonal entries. Its diagonal gives a pointwise visibility density, whereas its kernel,
metric-dependent spectrum, and prior-preconditioned modes rank and couple the spatial
patterns that are locally distinguishable by the chosen sensing and excitation
programme. In this sense, the paper organizes Fisher-information, Bayesian optimal
experimental design (OED), likelihood-informed subspace (LIS), and Gramian ideas into a
mechanics-specific operator language for spatial identifiability.

The framework is local to \(m_0\), to the assumed covariance model, and to the chosen
parameter metric or prior. It is independent of the particular mechanics model and
observation type, provided the local Jacobian and covariance model are specified.
Point sensors, distributed strain, full-field measurements, modal data, time histories,
and frequency-response observations enter through the same parameter-space geometry.
Independent observation blocks add at the operator level; correlated errors require the
full joint covariance. Parameterization, prior metric, model discrepancy, nuisance
parameters, and correlated noise modify this local geometry.
\Cref{fig:general_setup} summarizes the representative mechanics setting underlying the
distributed-parameter inverse problems considered here.

\begin{figure}[!htbp]
\centering
\includegraphics[width=0.72\textwidth]{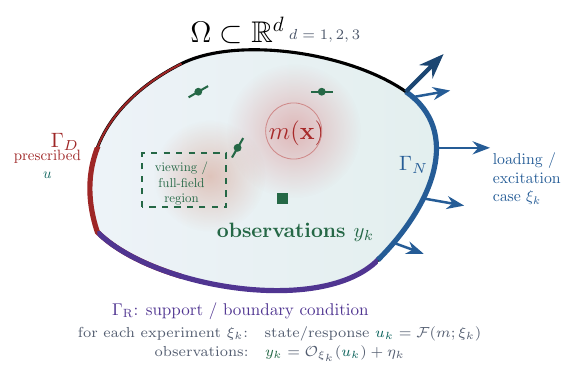}
\caption{General mechanics setting for the distributed-parameter inverse problem. 
An unknown spatial field \(m(\mathbf{x})\) enters a mechanics-governed forward model on a domain \(\Omega\). 
Loads, boundary conditions, excitations, sensors, and sampling choices define experiments \(\xi_k\), which generate state responses \(u_k\) and observations \(y_k\). 
The inverse problem uses these observations to infer the parameter field and to assess which spatial perturbation patterns are identifiable.}
\label{fig:general_setup}
\end{figure}

Relative to this literature, the contributions of the paper are fourfold. First, it
formulates a mechanics-oriented local information-operator framework for spatial
identifiability in distributed-parameter inverse problems, explicitly linking Fisher
information, Gauss--Newton curvature, sensitivity analysis, and parameter observability.
Second, it distinguishes pointwise visibility from operator-level spatial identifiability
by comparing diagonal information densities with metric- and prior-preconditioned
information modes. Third, it provides a unified treatment of heterogeneous observation
systems, including conditions for additive information, correlated errors, model
discrepancy, and nuisance-parameter confounding. Fourth, it illustrates the framework
through analytical and computational mechanics examples chosen to expose complementary
aspects of the same idea: closed-form beam kernels, static--dynamic data fusion, and
information-subspace reconstruction of a two-dimensional damage field.

\section{Bayesian distributed-parameter inverse problems}
\label{sec:bayesian}

\subsection{Forward model, prior, and posterior}

Let \(m\) denote an unknown spatial parameter field on a mechanical domain
\(\Omega \subset \mathbb{R}^d\), \(d=1,2,3\), or its finite-dimensional coefficient
representation after discretization. Depending on the application, \(m\) may represent
material, geometric, degradation, boundary/support, loading, environmental, or
model-discrepancy quantities. The notation \(m\) is used for both the field and its
coefficient vector when no ambiguity arises; metric-dependent statements are made
explicit below.

The parameter field enters a mechanics-governed forward problem through a state map
\[
u = \mathcal F(m),
\]
where \(u\) may denote displacement, strain, stress, modal response, frequency-response
data, or another state-dependent quantity. Observations are generated by an observation
operator \(\mathcal O_{\xi}\), indexed by an experiment design \(\xi\), giving the
parameter-to-observation map
\begin{equation}
H(m;\xi) = \mathcal O_{\xi}(\mathcal F(m)).
\label{eq:forward_map}
\end{equation}
The design variable \(\xi\) may include sensor positions, load locations, excitation
frequencies, sampling times, viewing regions, gauge orientations, or boundary-control
settings.

The measurement model is
\begin{equation}
y = H(m;\xi) + \eta,
\qquad
\eta \sim \mathcal N(0,R),
\label{eq:observation_model}
\end{equation}
where \(R\) is the observation-noise covariance. For a collection of experiments
\(\Xi=\{\xi_k\}_{k=1}^{K}\), the stacked observation map is
\begin{equation}
Y(m;\Xi)=
\begin{bmatrix}
H(m;\xi_1)\\
\vdots\\
H(m;\xi_K)
\end{bmatrix}.
\label{eq:stacked_map}
\end{equation}
In what follows, \(R\) denotes the covariance of the stacked data vector. It may be
block diagonal when experiments or sensor blocks are conditionally independent, or fully
populated when measurement errors are correlated across sensors, load cases, times, or
frequencies.

In the Bayesian setting, the unknown field is assigned a prior distribution
\(\pi_{\mathrm{pr}}(m)\), encoding mechanical or structural information available before
the data are collected~\cite{dashtiBayesianApproachInverse2017a}. For a Gaussian prior,
\begin{equation}
m \sim \mathcal N(m_{\mathrm{pr}},C_{\mathrm{pr}}),
\qquad
Q_{\mathrm{pr}} = C_{\mathrm{pr}}^{-1},
\label{eq:prior}
\end{equation}
the posterior density is
\begin{equation}
\pi(m\mid y)
\propto
\exp\!\left[-\frac{1}{2}\|y-Y(m;\Xi)\|_{R^{-1}}^2\right]\pi_{\mathrm{pr}}(m).
\label{eq:posterior}
\end{equation}
The negative log-posterior is
\begin{equation}
\mathcal J(m)
=
\Phi(m;y)
+
\frac{1}{2}\|m-m_{\mathrm{pr}}\|_{Q_{\mathrm{pr}}}^{2},
\qquad
\Phi(m;y)
=
\frac{1}{2}\|y-Y(m;\Xi)\|_{R^{-1}}^{2},
\label{eq:negative_log_posterior}
\end{equation}
where \(\Phi\) is the data-misfit functional. The maximum a posteriori estimate is \(m_{\mathrm{MAP}}=\operatorname*{arg\,min}_{m}\mathcal J(m)\).

\subsection{Local linearization and posterior geometry}

For distributed-parameter fields, full posterior exploration is often expensive because
each evaluation of \(Y(m;\Xi)\) requires a mechanics solve. A local alternative is to
study the posterior or likelihood geometry near a nominal parameter field \(m_0\), such
as a prior mean, a healthy baseline, a current design state, or a MAP estimate. The
linearized observation map is
\begin{equation}
Y(m_0+\delta m;\Xi)
\approx
Y(m_0;\Xi)+J(m_0;\Xi)\delta m,
\qquad
J(m_0;\Xi)=DY(m_0;\Xi).
\label{eq:linearization}
\end{equation}

Under this approximation and Gaussian noise, the likelihood is Gaussian in
\(\delta m\). With a Gaussian prior, the local posterior is Gaussian with precision
\begin{equation}
Q_{\mathrm{post}}(m_0)
\approx
Q_{\mathrm{pr}}+\mathcal I(m_0;\Xi),
\label{eq:local_post_precision}
\end{equation}
where \(\mathcal I(m_0;\Xi)\) is the data-induced precision correction defined below.

\Cref{fig:bayesian_geometry} illustrates this local update geometry in a
two-dimensional parameter subspace.
Thus the key question becomes not only what estimate minimizes \(\mathcal J\), but how
the data-induced precision changes the geometry of admissible parameter perturbations.

All statements in the following sections are local to \(m_0\), to the assumed covariance
model, and to the chosen experiment set \(\Xi\). This locality is essential: the framework
describes first-order distinguishability of distributed parameter perturbations, not
global identifiability over the full admissible parameter space.

\begin{figure}[t]
\centering
\includegraphics[width=\textwidth]{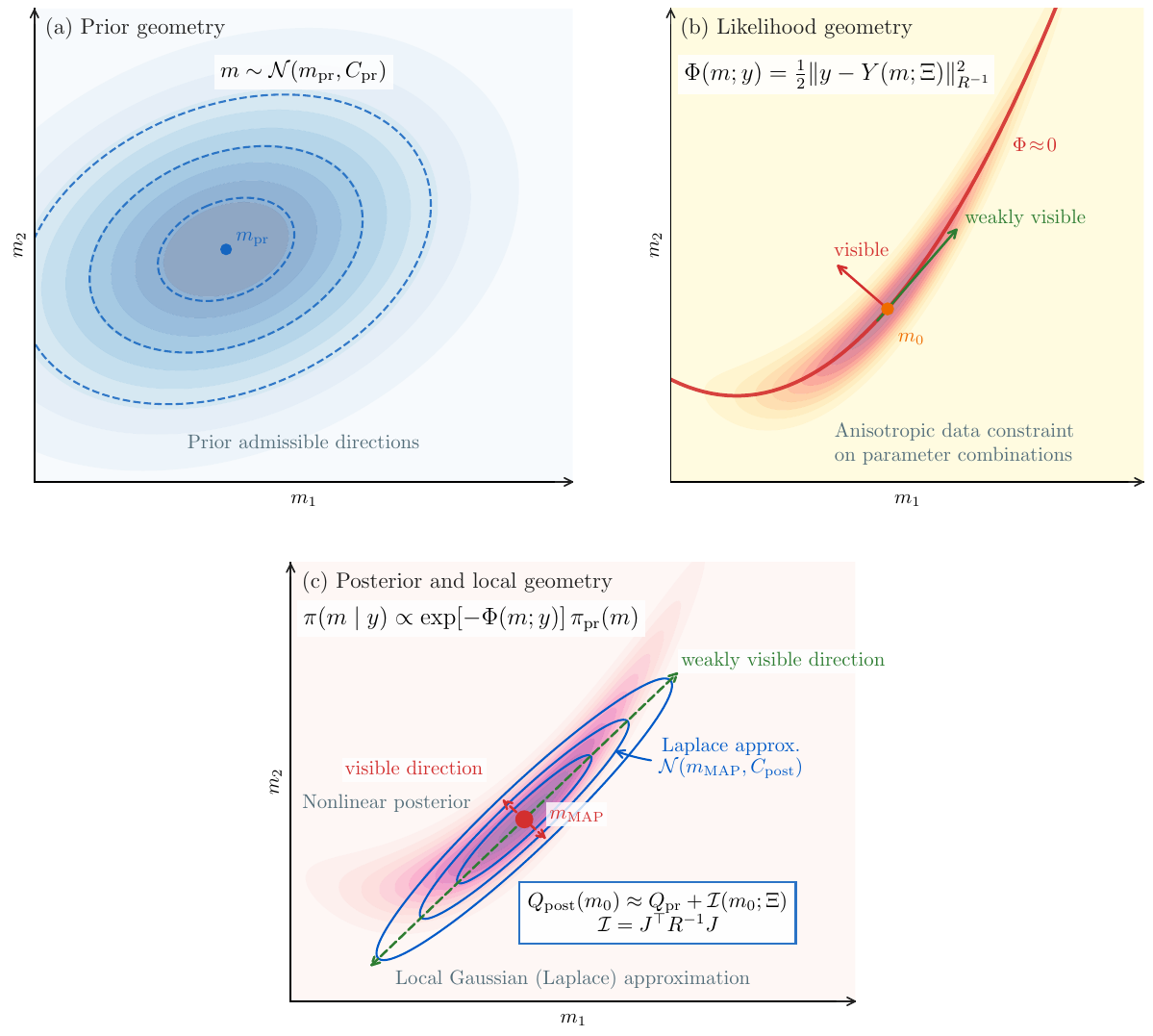}
\caption{Two-dimensional illustration of the Bayesian update geometry.
\textbf{(a)}~Prior distribution $\pi_{\mathrm{pr}}(m)=\mathcal N(m_{\mathrm{pr}},C_{\mathrm{pr}})$.
\textbf{(b)}~Likelihood $\pi(y\mid m)\propto\exp[-\Phi(m;y)]$ induced by a nonlinear forward map; weak and strong directions indicate data resolution anisotropy.
\textbf{(c)}~Posterior $\pi(m\mid y)$ and its local Laplace approximation (blue contours) around $m_{\mathrm{MAP}}$, with posterior precision given by \cref{eq:local_post_precision}.}

\label{fig:bayesian_geometry}

\end{figure}

\section{Local Bayesian information operator}
\label{sec:framework}

\subsection{Operator and information bilinear form}

\begin{definition}[Local information operator]
\label{def:local_information_operator}
Assume that the stacked observation covariance \(R\) is symmetric positive definite on
the measurement space. If singular covariance models, exact constraints, or projected
measurements are used, \(R^{-1}\) should be replaced by the corresponding generalized
inverse on the relevant measurement subspace. In a Hilbert-space formulation,
\(J(m_0;\Xi):\mathcal M\to\mathcal Y\) is the Fréchet derivative of the
parameter-to-observation map and the continuum operator is \(J^{\ast}R^{-1}J\), where
\(J^{\ast}\) denotes the adjoint with respect to the chosen parameter and observation
inner products. In the finite-dimensional notation used below, the
\emph{local information operator} at \(m_0\) for the experiment set \(\Xi\) is
\begin{equation}
\mathcal I(m_0;\Xi)
:=
J(m_0;\Xi)^{\!\top}R^{-1}J(m_0;\Xi).
\label{eq:local_information_operator}
\end{equation}
Here the transpose represents the corresponding discrete adjoint after the parameter
metric and observation covariance weighting have been specified.
\end{definition}

This definition is parameter-space based: observation sensitivities are pulled back
through \(J^{\!\top}R^{-1}\) to act on parameter-field perturbations. The operator
\(\mathcal I\) is the likelihood contribution to the local posterior precision. It is
positive semidefinite and defines an \emph{information-induced bilinear form} on parameter
perturbations,
\begin{equation}
\langle \delta m_1,\delta m_2\rangle_{\mathcal I}
=
\delta m_1^{\!\top}\mathcal I\,\delta m_2.
\label{eq:information_bilinear_form}
\end{equation}
The associated quadratic form
\(\delta m^{\!\top}\mathcal I\delta m=\|J\delta m\|_{R^{-1}}^2\)
measures the noise-weighted linearized output change induced by \(\delta m\), while the
bilinear form measures cross-information between perturbation directions.

\subsection{Fisher--Gauss--Newton--sensitivity-Gramian equivalence}
\label{sec:fisher_gn_gramian_equivalence}

Under the additive Gaussian observation model~\eqref{eq:observation_model}, with
parameter-independent covariance \(R\), the negative log-likelihood is
\begin{equation}
\Phi(m;y)
=
\frac{1}{2}\|y-Y(m;\Xi)\|_{R^{-1}}^{2}
+\mathrm{const}.
\label{eq:log_likelihood}
\end{equation}
The exact Hessian of this data-misfit functional at \(m_0\) is
\begin{equation}
\nabla^{2}\Phi(m_0)
=
J^{\!\top}R^{-1}J
+
\sum_{i=1}^{N_y}
\bigl[R^{-1}(Y(m_0;\Xi)-y)\bigr]_i\,\nabla^{2}Y_i(m_0;\Xi),
\label{eq:full_hessian}
\end{equation}
where \(J=DY(m_0;\Xi)\). The first term is the Gauss--Newton curvature; the second term
is residual dependent and contains the second derivatives of the parameter-to-observation
map. At small residuals this term is neglected in the Gauss--Newton approximation; in
expectation under the correctly specified local Gaussian model with data generated at
\(m_0\), it vanishes. The local curvature is then governed by the same operator defined in
\cref{eq:local_information_operator}.

This equivalence is classical in statistical parameter estimation, nonlinear
least-squares inverse problems, optimal experimental design, and observability or
identifiability analysis
\cite{jauffretObservabilityFisherInformation2007a,
powelEmpiricalObservabilityGramian2020,
boyaciogluDualityStochasticObservability2024,
ashyraliyevSystemsBiologyParameter2009,
qiOptimalExperimentalDesign2026}. It provides the operator
interpretation on which the subsequent spatial-identifiability diagnostics are built.

\begin{proposition}[Local Fisher--Gauss--Newton--sensitivity-Gramian equivalence]
\label{prop:operator_identity}
Assume the additive Gaussian observation model~\eqref{eq:observation_model}, with
covariance \(R\) symmetric positive definite and independent of \(m\), and let
\(J=DY(m_0;\Xi)\). For the locally linearized model, equivalently for the expected
curvature under the correctly specified local Gaussian model at \(m_0\), the local
information operator defined in \cref{eq:local_information_operator} has the following
equivalent interpretations:
\begin{enumerate}
\item it is the Fisher information matrix of the local Gaussian likelihood;
\item it is the Gauss--Newton Hessian of the data misfit, and the exact Hessian of the
locally linearized quadratic misfit;
\item it is a \emph{noise-weighted parameter-output sensitivity Gramian}, because for every
parameter perturbation \(\delta m\),
\begin{equation}
\|J\delta m\|_{R^{-1}}^{2}
=
\delta m^{\!\top}\mathcal I(m_0;\Xi)\,\delta m .
\label{eq:weighted_output_energy}
\end{equation}
\end{enumerate}
\end{proposition}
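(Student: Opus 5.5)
The plan is to verify the three interpretations directly from the linearized Gaussian model, taking $R$ symmetric positive definite and independent of $m$, and writing $\mathcal I=J^{\!\top}R^{-1}J$ as in \cref{eq:local_information_operator}. Everything will be expressed in the perturbation variable $\delta m$ around $m_0$. Under the linearization~\eqref{eq:linearization}, the observation model becomes
\[
y = Y_0 + J\delta m + \eta,\qquad Y_0 = Y(m_0;\Xi),\qquad \eta\sim\mathcal N(0,R).
\]
The resulting negative log-likelihood is the quadratic
\[
\Phi_{\mathrm{lin}}(\delta m)=\tfrac12\|y-Y_0-J\delta m\|_{R^{-1}}^2+\mathrm{const},
\]
where the constant, $\tfrac12\log\det(2\pi R)$, does not depend on $\delta m$ precisely because $R$ does not.

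For item (i), I will compute the score $\nabla_{\delta m}\log p(y\mid\delta m)=J^{\!\top}R^{-1}(y-Y_0-J\delta m)$. Under the model, this equals $J^{\!\top}R^{-1}\eta$. The Fisher information is the covariance of the score, which gives
\[
\mathbb E[J^{\!\top}R^{-1}\eta\eta^{\!\top}R^{-1}J]=J^{\!\top}R^{-1}RR^{-1}J=J^{\!\top}R^{-1}J.
\]
As a cross-check, I will also use the alternative form $-\mathbb E[\nabla^2\log p]$. Since the score is affine in $\delta m$, its Jacobian is the constant $-J^{\!\top}R^{-1}J$, so both forms agree.

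For item (ii), the Hessian of $\Phi_{\mathrm{lin}}$ follows by differentiating the gradient $-J^{\!\top}R^{-1}(y-Y_0-J\delta m)$ once more, giving exactly $J^{\!\top}R^{-1}J$. This holds because the linearized map has zero second derivative. For the nonlinear misfit, I will invoke \cref{eq:full_hessian}: its first term is the Gauss--Newton Hessian by definition. The residual term involves $\nabla^2 Y_i(m_0)$ multiplied by $[R^{-1}(Y(m_0)-y)]_i$. Under correct specification at $m_0$, $Y(m_0)-y=-\eta$ has zero mean, and the factors $\nabla^2Y_i(m_0)$ are deterministic. Linearity of expectation therefore kills this term, so the expected curvature is $\mathcal I$. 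This expectation argument is the only step needing care. The obstacle is to state cleanly that the expectation is taken with $m_0$ held fixed, so that the second derivatives do not depend on the data. It also requires $\eta$ to be integrable, which is immediate for a Gaussian.

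For item (iii), expanding the norm gives $\|J\delta m\|_{R^{-1}}^2=(J\delta m)^{\!\top}R^{-1}(J\delta m)=\delta m^{\!\top}J^{\!\top}R^{-1}J\,\delta m$, which is \cref{eq:weighted_output_energy}. By polarization, the bilinear form~\eqref{eq:information_bilinear_form} then equals $\langle J\delta m_1,J\delta m_2\rangle_{R^{-1}}$. This is the Gram matrix of the noise-weighted sensitivity columns $R^{-1/2}J$, which justifies the Gramian terminology. Positive semidefiniteness of $\mathcal I$ follows from the same identity.

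In the Hilbert-space setting, the same computations go through with $J^{\!\top}$ replaced by the adjoint $J^{\ast}$ for the chosen inner products.
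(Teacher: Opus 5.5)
Your proposal is correct and follows essentially the same route as the paper's proof: Fisher information as the covariance of the score $J^{\top}R^{-1}(y-Y(m_0;\Xi))$, the Hessian obtained by differentiating the linearized quadratic misfit twice, and the Gramian identity obtained by expanding the weighted norm. Your additions are sound, and they only make explicit what the paper states in the surrounding text. They are the $-\mathbb{E}[\nabla^{2}\log p]$ cross-check, the explicit expectation argument removing the residual term of the nonlinear Hessian, and the polarization remark identifying $\mathcal I$ as the Gram matrix of $R^{-1/2}J$.
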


\begin{proof}
For the locally linearized observation map
\[
Y(m_0+\delta m;\Xi)
\approx
Y(m_0;\Xi)+J\delta m,
\]
the data misfit becomes
\[
\Phi_{\mathrm{lin}}(\delta m;y)
=
\frac{1}{2}
\|y-Y(m_0;\Xi)-J\delta m\|_{R^{-1}}^{2}
+\mathrm{const}.
\]
Differentiating twice with respect to \(\delta m\) gives
\[
\nabla_{\delta m}^{2}\Phi_{\mathrm{lin}}
=
J^{\!\top}R^{-1}J
=
\mathcal I(m_0;\Xi).
\]
Thus the operator is the Hessian of the locally linearized quadratic misfit and the
Gauss--Newton curvature of the nonlinear misfit.

For the Fisher-information interpretation, the score of the local Gaussian likelihood at
\(m_0\) is
\[
\mathfrak s(m_0;y)
=
J^{\!\top}R^{-1}\bigl(y-Y(m_0;\Xi)\bigr).
\]
Under the correctly specified local data-generating model at \(m_0\),
\[
\mathbb E\!\left[
\bigl(y-Y(m_0;\Xi)\bigr)
\bigl(y-Y(m_0;\Xi)\bigr)^{\!\top}
\right]
=
R .
\]
Hence
\[
\mathbb E\!\left[
\mathfrak s(m_0;y)\mathfrak s(m_0;y)^{\!\top}
\right]
=
J^{\!\top}R^{-1}RR^{-1}J
=
J^{\!\top}R^{-1}J
=
\mathcal I(m_0;\Xi).
\]
Finally, the linearized output perturbation associated with \(\delta m\) is
\(\delta y=J\delta m\). Therefore,
\[
\|\delta y\|_{R^{-1}}^{2}
=
(J\delta m)^{\!\top}R^{-1}(J\delta m)
=
\delta m^{\!\top}J^{\!\top}R^{-1}J\,\delta m
=
\delta m^{\!\top}\mathcal I(m_0;\Xi)\,\delta m .
\]
This proves the three interpretations.
\end{proof}

The sensitivity-Gramian interpretation is the mechanics-specific reading used below. A
perturbation \(\delta m\) is \emph{locally visible} when it produces a large noise-weighted
linearized output change \(\|J\delta m\|_{R^{-1}}\), \emph{weakly visible} when this quantity is
small, and \emph{locally invisible} when \(J\delta m=0\). After a parameter-space metric or
prior metric has been specified, the corresponding metric- or prior-preconditioned
eigenvectors define \emph{parameter-observability modes}: large eigenvalues indicate
strongly visible spatial patterns, small eigenvalues indicate weakly visible patterns,
and null modes identify locally unobservable perturbations under the selected experiment
set.

Thus \cref{prop:operator_identity} links the statistical, optimization, and mechanics
interpretations used in the remainder of the paper. This is the parameter-identification
sense of observability used throughout: a pullback information-metric analogy for
parameter perturbations, distinct from the classical control-theoretic
state-observability Gramian.

\subsection{Additivity, rank, and reparameterization}

If the stacked data consist of conditionally independent sensor, load-case, frequency, or
experiment blocks with Jacobians \(J_b\) and covariances \(R_b\), then
\begin{equation}
\mathcal I_{\mathrm{tot}}
=
\sum_b J_b^{\!\top}R_b^{-1}J_b.
\label{eq:additivity}
\end{equation}
Thus independent observation blocks contribute additively to the same parameter-space
information geometry. If the observation errors are correlated across blocks, the correct
expression is instead
\begin{equation}
\mathcal I_{\mathrm{tot}}
=
J_{\mathrm{tot}}^{\!\top}R_{\mathrm{joint}}^{-1}J_{\mathrm{tot}},
\label{eq:joint_information_operator}
\end{equation}
with the full joint covariance. Additivity is therefore a \emph{block-independence statement},
not a universal property of arbitrary data fusion; correlated-observation design requires
the covariance structure to be part of the information operator rather than a post-hoc
correction~\cite{attiaOptimalExperimentalDesign2022,qiOptimalExperimentalDesign2026}.

The operator is also parameterization dependent. Under a differentiable reparameterization
\(\tilde m=\mathcal R(m)\), with local Jacobian
\begin{equation}
T=\frac{\partial m}{\partial \tilde m}\bigg|_{\tilde m_0},
\label{eq:transform_matrix}
\end{equation}
the linearized map transforms as \(\tilde J=JT\), and hence
\begin{equation}
\tilde{\mathcal I}
=
T^{\!\top}\mathcal I T.
\label{eq:transformation_rule}
\end{equation}
Consequently, raw eigenvalues and eigenvectors are meaningful only relative to a stated
parameterization and metric.

Finally, for a discretized parameter field with \(n=\dim(m)\) and \(N_y\)
scalar observations, the discrete information matrix satisfies
\begin{equation}
\operatorname{rank}(\mathcal I)
\le
\min(N_y,n).
\label{eq:rank_bound}
\end{equation}

For continuous or distributed observations, the same bound applies after
discretization; in the operator setting, the effective rank is governed by the
smoothing properties of the forward map and the sensor resolution. Equality need not hold. Symmetries, smoothing by the forward operator, insufficient
excitation, poor sensor placement, and confounding with nuisance parameters can all reduce
the effective rank. Dense sensing alone does not remove physics-induced nullspaces.
The relevant object for distributed inverse problems is therefore the parameter-level
sensitivity Gramian \(\mathcal I\), not the classical state-observability Gramian.

\subsection{Metric-dependent information modes}

Because \(\mathcal I\) is positive semidefinite, its spectral structure provides a local
decomposition of parameter-field visibility. Large eigenvalues correspond to perturbation
directions that induce large noise-weighted linearized output changes; small eigenvalues
correspond to weakly visible directions; zero eigenvalues correspond to locally invisible
directions.

However, in finite-element and function-space settings, the raw Euclidean eigenproblem
\begin{equation}
\mathcal I\psi_i=\lambda_i\psi_i
\label{eq:raw_eigenproblem}
\end{equation}
is not intrinsically meaningful unless a parameter-space metric has been specified. If
\(m_h=\sum_j c_j\varphi_j\) is a finite-element representation and \(\mathcal I_h\) denotes the
discrete matrix representation of the information bilinear form, then an \(L^2\)-consistent
spectral problem is
\begin{equation}
\mathcal I_h c_i = \lambda_i M_h c_i,
\label{eq:mass_weighted_eigenproblem}
\end{equation}
where \(M_h\) is the mass matrix of the parameter basis. Other choices, such as energy
inner products or Sobolev metrics, lead to different but explicitly defined information
modes.

Thus raw modes should be regarded as \emph{coordinate diagnostics}. Metric-weighted modes are
appropriate when the goal is discretization-consistent spatial interpretation. In Bayesian
settings, the natural metric is often supplied by the prior precision, leading to the
prior-preconditioned formulation developed next.

The diagonal and the full operator provide complementary diagnostics (see
\cref{fig:diagonal_vs_full_operator}). For a discretized parameter field, the diagonal
entries of \(\mathcal I\) measure the visibility of coordinate perturbations in the
chosen basis. In a continuous representation, when the information operator admits a
kernel representation with a well-defined diagonal, \(\mathcal I(x,\bar x)\) defines a
two-point information object and its diagonal
\[
\mathcal I_{\mathrm{diag}}(x):=\mathcal I(x,x)
\]
gives a \emph{pointwise information density}. For ideal point observations or
distributional sensitivities, this density should be interpreted through the chosen
discretization, regularization, or finite sensor footprint. In this sense it is a
basis- or discretization-localized diagnostic, not an invariant continuum field
quantity. It is valuable for visualizing where localized perturbations are most visible. Spatial
identifiability, however, is governed by the full bilinear form
\[
\delta m^{\!\top}\mathcal I\delta m
\quad
\text{or}
\quad
\int_{\Omega}\int_{\Omega}
\delta m(x)\,\mathcal I(x,\bar x)\,\delta m(\bar x)\,dx\,d\bar x .
\]
The off-diagonal kernel encodes spatial coupling, cancellation, and
confounding between perturbations at different locations. Thus the diagonal
answers where localized visibility is high, while the full operator answers
which spatial patterns are locally distinguishable.

\begin{figure}[t]
\centering
\includegraphics[width=\textwidth]{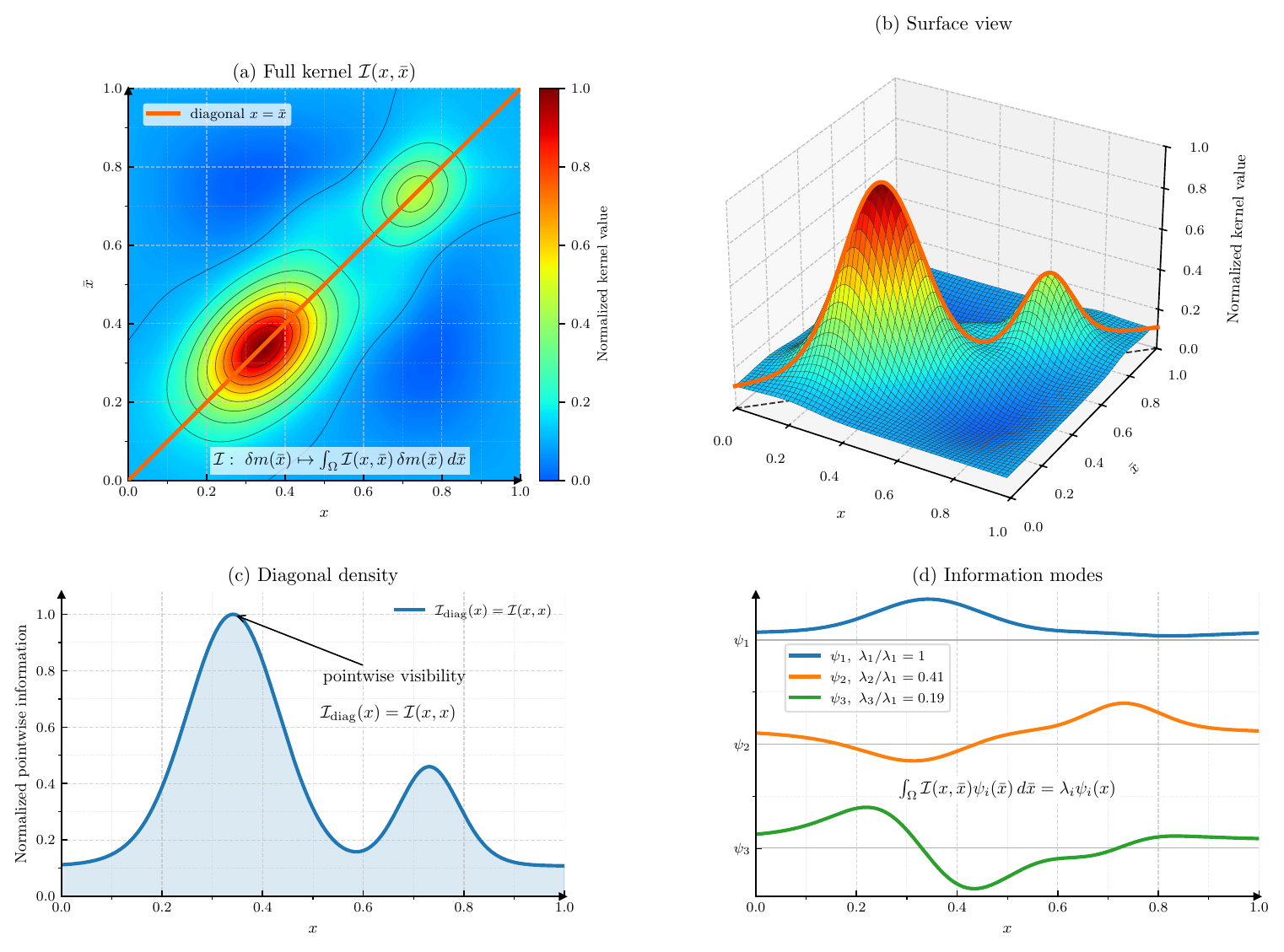}
\caption{Conceptual illustration of pointwise visibility and operator-level spatial identifiability.
(a) Full information kernel \(\mathcal I(x,\bar x)\), with the diagonal \(x=\bar x\) marked.
(b) Surface view of the same kernel, emphasizing that information is a two-point spatial object.
(c) Diagonal information density \(\mathcal I(x,x)\), which visualizes coordinate-wise local visibility.
(d) Leading information modes \(\psi_i(x)\), obtained from the full operator eigenproblem. The diagonal is useful for visualization, but spatial identifiability is governed by the full kernel and its metric- or prior-preconditioned eigenmodes. The modes in panel~(d) are vertically shifted for visualization.}
\label{fig:diagonal_vs_full_operator}
\end{figure}

\section{Prior-preconditioned identifiable subspaces}
\label{sec:prior_preconditioned}

\subsection{Posterior precision and prior-relative modes}

Under the local Gaussian approximation, the posterior precision is given by
\cref{eq:local_post_precision}. This additive prior-plus-likelihood
precision structure is exact for linear Gaussian inverse problems and is standard
in local Gaussian or Laplace approximations of nonlinear Bayesian inverse problems
\cite{stuartInverseProblemsBayesian2010,dashtiBayesianApproachInverse2017a,
bui-thanhComputationalFrameworkInfiniteDimensional2013,
petraComputationalFrameworkInfiniteDimensional2014}.

The prior contributes \(Q_{\mathrm{pr}}\), while the likelihood contributes
\(\mathcal I\). Thus \(\mathcal I\) should not only be interpreted as an absolute
measure of data sensitivity, but also relative to the prior precision that already
regularizes or constrains the parameter field.

For large mechanics models, prior-preconditioned diagnostics require scalable actions of
the prior or whitening operator in addition to \(J\) and \(J^{\!\top}\). When the prior is
defined through a differential operator, these actions may involve additional linear
solves; this computational requirement is included in the matrix-free viewpoint of
\cref{sec:matrix_free}.

To compare likelihood information with prior information, we use the standard
prior-preconditioned Hessian construction underlying likelihood-informed subspaces \cite{flathFastAlgorithmsBayesian2011,
bui-thanhComputationalFrameworkInfiniteDimensional2013,
cuiLikelihoodinformedDimensionReduction2014,
spantiniOptimalLowrankApproximations2015}.
The prior-relative modes are obtained from the generalized eigenproblem
\begin{equation}
\mathcal I\phi_i
=
\lambda_i^{\mathrm{pr}} Q_{\mathrm{pr}}\phi_i .
\label{eq:generalized_eigenproblem}
\end{equation}
In the terminology of the present paper, these modes are interpreted as
\emph{prior-relative parameter-observability modes}: they identify spatial
parameter-field perturbations for which likelihood precision is large relative to prior
precision.

Equivalently, when \(C_{\mathrm{pr}}\) is symmetric positive definite and using a
symmetric square root \(C_{\mathrm{pr}}^{1/2}\), the same generalized eigenvalues are
obtained from the ordinary eigenproblem for the prior-preconditioned information
operator, or prior-preconditioned Gauss--Newton data-misfit Hessian
\cite{flathFastAlgorithmsBayesian2011,
bui-thanhComputationalFrameworkInfiniteDimensional2013,
spantiniOptimalLowrankApproximations2015},

\begin{equation}
C_{\mathrm{pr}}^{1/2}\mathcal I C_{\mathrm{pr}}^{1/2} w_i
=
\lambda_i^{\mathrm{pr}} w_i,
\qquad
\phi_i=C_{\mathrm{pr}}^{1/2}w_i .
\label{eq:prior_preconditioned_eigenproblem}
\end{equation}
Here \(w_i\) denotes a mode in prior-whitened coordinates, while \(\phi_i\) denotes the
corresponding perturbation mode in the original parameter coordinates. We use
\(\phi_i\) for these prior-relative modes to distinguish them from the raw or
metric-weighted information modes \(\psi_i\) introduced above.

Here the eigenpairs are used primarily as mechanics-oriented diagnostics rather than as a
reduced basis for posterior sampling. Each mode \(\phi_i\) represents a prior-plausible
parameter-field pattern, such as stiffness, compliance, or damage, whose visibility
relative to the prior is quantified by \(\lambda_i^{\mathrm{pr}}\). The formulation also
reduces sensitivity to mesh-dependent parameter scalings and connects the diagnostics to
optimal low-rank posterior approximation and likelihood-informed-subspace theory
\cite{cuiLikelihoodinformedDimensionReduction2014,
spantiniOptimalLowrankApproximations2015,
cuiPriorNormalizationCertified2022a}.

The eigenvalues \(\lambda_i^{\mathrm{pr}}\) have the Rayleigh-quotient interpretation
\begin{equation}
\lambda_i^{\mathrm{pr}}
=
\frac{\phi_i^{\!\top}\mathcal I\phi_i}
     {\phi_i^{\!\top}Q_{\mathrm{pr}}\phi_i},
\label{eq:prior_relative_rayleigh}
\end{equation}
for nonzero \(\phi_i\). They therefore measure the ratio between likelihood
information and prior information in each mode. For the linearized Gaussian problem, if
\(c_i\) denotes the modal coefficient associated with a \(Q_{\mathrm{pr}}\)-orthonormal
generalized eigenmode \(\phi_i\), its posterior variance is reduced relative to its
prior variance by
\begin{equation}
\frac{\operatorname{Var}_{\mathrm{post}}(c_i)}
     {\operatorname{Var}_{\mathrm{pr}}(c_i)}
=
\frac{1}{1+\lambda_i^{\mathrm{pr}}}.
\label{eq:posterior_variance_reduction}
\end{equation}
Thus \(\lambda_i^{\mathrm{pr}}\gg 1\) indicates a data-dominated direction, whereas
\(\lambda_i^{\mathrm{pr}}\ll 1\) indicates a prior-dominated direction.

\subsection{Likelihood-informed subspaces}

Following the likelihood-informed-subspace (LIS) viewpoint
\cite{cuiLikelihoodinformedDimensionReduction2014,
spantiniOptimalLowrankApproximations2015,
cuiPriorNormalizationCertified2022a}, the dominant prior-relative modes define
\begin{equation}
\mathcal V_{\mathrm{LIS}}
=
\operatorname{span}\{\phi_i:\lambda_i^{\mathrm{pr}}>\tau\},
\label{eq:lis_definition}
\end{equation}
where \(\tau\) is a problem-dependent threshold. This subspace contains the
prior-plausible directions in which the likelihood produces substantial posterior
contraction. Its dimension is often much smaller than the discretized parameter
dimension, even when the parameter field itself is high dimensional.

Here \(\mathcal V_{\mathrm{LIS}}\) is a local Laplace-type informed subspace defined by
the single information operator at \(m_0\). In nonlinear Bayesian inverse problems, LIS
constructions may instead use posterior-averaged or certified likelihood curvature over
the posterior support. The present subspace therefore serves as a local mechanics
diagnostic that complements those broader nonlinear LIS definitions.

The complement of \(\mathcal V_{\mathrm{LIS}}\) is not necessarily unimportant. It
contains directions that remain controlled primarily by the prior under the given
observation programme. In mechanics applications, these directions may correspond to
poorly excited regions, damage patterns hidden from the current sensor layout, spatial
oscillations smoothed by the forward operator, or parameter combinations confounded with
nuisance quantities. In the context of sensor or experiment design, these weak directions
are natural candidates for additional observations.

The threshold \(\tau\) should be chosen according to the intended diagnostic purpose. By
\cref{eq:posterior_variance_reduction}, \(\tau=1\) selects directions whose posterior
variance is reduced by at least one half relative to the prior, while \(\tau=3\) selects
directions reduced to at most one quarter. For exploratory diagnostics, a more inclusive
threshold such as \(\tau=10^{-2}\) can be useful to retain weakly informed but
non-negligible directions. In practice, the eigenvalue decay and the presence of a
spectral elbow should also guide the retained subspace dimension.

\subsection{Prior-whitened sensitivity-Gramian interpretation}

The prior-preconditioned operator also has a direct sensitivity-Gramian interpretation.
For a Gaussian prior, standardized prior perturbations are mapped to physical
parameter-field perturbations by
\begin{equation}
\delta m=C_{\mathrm{pr}}^{1/2}\delta\nu .
\label{eq:prior_whitened_parameter}
\end{equation}
Equivalently, a prior draw can be written as
\(m=m_{\mathrm{pr}}+C_{\mathrm{pr}}^{1/2}\nu\) with
\(\nu\sim\mathcal N(0,I)\). Thus \(C_{\mathrm{pr}}^{1/2}\) maps standardized prior
perturbations into prior-plausible perturbations of the physical parameter field.

A perturbation \(\delta \nu\) in prior-whitened coordinates induces
\[
\delta y=JC_{\mathrm{pr}}^{1/2}\delta \nu .
\]
The corresponding noise-weighted output energy is
\begin{equation}
\|\delta y\|_{R^{-1}}^{2}
=
\delta \nu^{\!\top}
\left[
C_{\mathrm{pr}}^{1/2}J^{\!\top}R^{-1}JC_{\mathrm{pr}}^{1/2}
\right]
\delta \nu .
\label{eq:prior_whitened_gramian}
\end{equation}
Therefore,
\begin{equation}
C_{\mathrm{pr}}^{1/2}\mathcal I C_{\mathrm{pr}}^{1/2}
=
C_{\mathrm{pr}}^{1/2}J^{\!\top}R^{-1}JC_{\mathrm{pr}}^{1/2}
\label{eq:prior_whitened_information_operator}
\end{equation}
is the noise-weighted parameter-output sensitivity Gramian in prior-whitened coordinates,
used here as an observability-type diagnostic for prior-plausible parameter-field
perturbations.

The raw information operator \(\mathcal I\) measures the visibility of arbitrary
coordinate perturbations in the chosen parameter representation. The prior-whitened
sensitivity Gramian measures the visibility of perturbations expressed in units of
prior uncertainty. Consequently, its dominant eigenmodes identify the mechanical
field patterns that are both prior-plausible and strongly distinguished by the
observation system.

\section{Heterogeneous observation operators}
\label{sec:observations}

\subsection{Chain-rule factorization}

To treat heterogeneous observations within one algebra, write the observation process in
the abstract form
\begin{equation}
y
=
\mathcal O_{\xi}
\bigl(u,\nabla u,\nabla^2u,\ldots;\mathbf x,t\bigr),
\qquad
u=\mathcal F(m).
\label{eq:general_observation_operator}
\end{equation}
This covers pointwise displacements, rotations, and strains; distributed strain and
full-field measurements; modal quantities; influence-line samples; time histories; and
frequency-response data. The observation technology changes the functional
\(\mathcal O_{\xi}\), the dimension of \(y\), and the covariance \(R\), but not the local
information template. This common template is useful for multi-type structural sensing and
multiresponse model updating, where different observation classes may carry complementary
information about the same parameter
field~\cite{bertolaOptimalMultiTypeSensor2017,bellMultiresponseParameterEstimation2007,
kimSequentialFrameworkImproving2019}.

The derivative of \(H=\mathcal O_{\xi}\circ\mathcal F\) factors as
\begin{equation}
J = S_{\xi}F_m,
\qquad
F_m=D\mathcal F(m_0),
\qquad
S_{\xi}=D\mathcal O_{\xi}(u_0),
\label{eq:chain_rule_factorization}
\end{equation}
where \(u_0=\mathcal F(m_0)\). In PDE settings, \(F_m\) and \(S_{\xi}\) should be
interpreted as Fréchet derivatives or their discrete matrix representations. Substituting
\cref{eq:chain_rule_factorization} into \cref{eq:local_information_operator}, the
information operator becomes
\begin{equation}
\mathcal I
=
F_m^{\!\top}S_{\xi}^{\!\top}R^{-1}S_{\xi}F_m.
\label{eq:factorized_information}
\end{equation}

This factorization separates three mechanisms:
\begin{enumerate}
\item \(F_m\): how parameter perturbations propagate through the mechanics;
\item \(S_{\xi}\): how the observation system samples the mechanical state;
\item \(R^{-1}\): how the statistical model weights parameter-induced output changes.
\end{enumerate}
This separation is important for interpreting \emph{observation complementarity}: two
observation systems may be different because they sample or process different state
components through \(S_{\xi}\), or because the mechanics map \(F_m\) makes them sensitive
to different parameter-field patterns.

\subsection{Observation classes}

\Cref{fig:observation_classes_conceptual} illustrates representative observation
mechanisms, and \cref{tab:observation_taxonomy} groups them by local Jacobian structure
rather than sensor technology. The classification is illustrative, not exhaustive or
mutually exclusive: the same physical sensor may enter different classes depending on
whether its output is used as a point value, a time history, a frequency-domain feature, a
modal quantity, or an averaged or derived measurement. The purpose is to show how different
observation blocks enter the same information-operator construction while inducing
different geometry in parameter space.

\begin{figure}[!htbp]
\centering
\includegraphics[width=\textwidth]{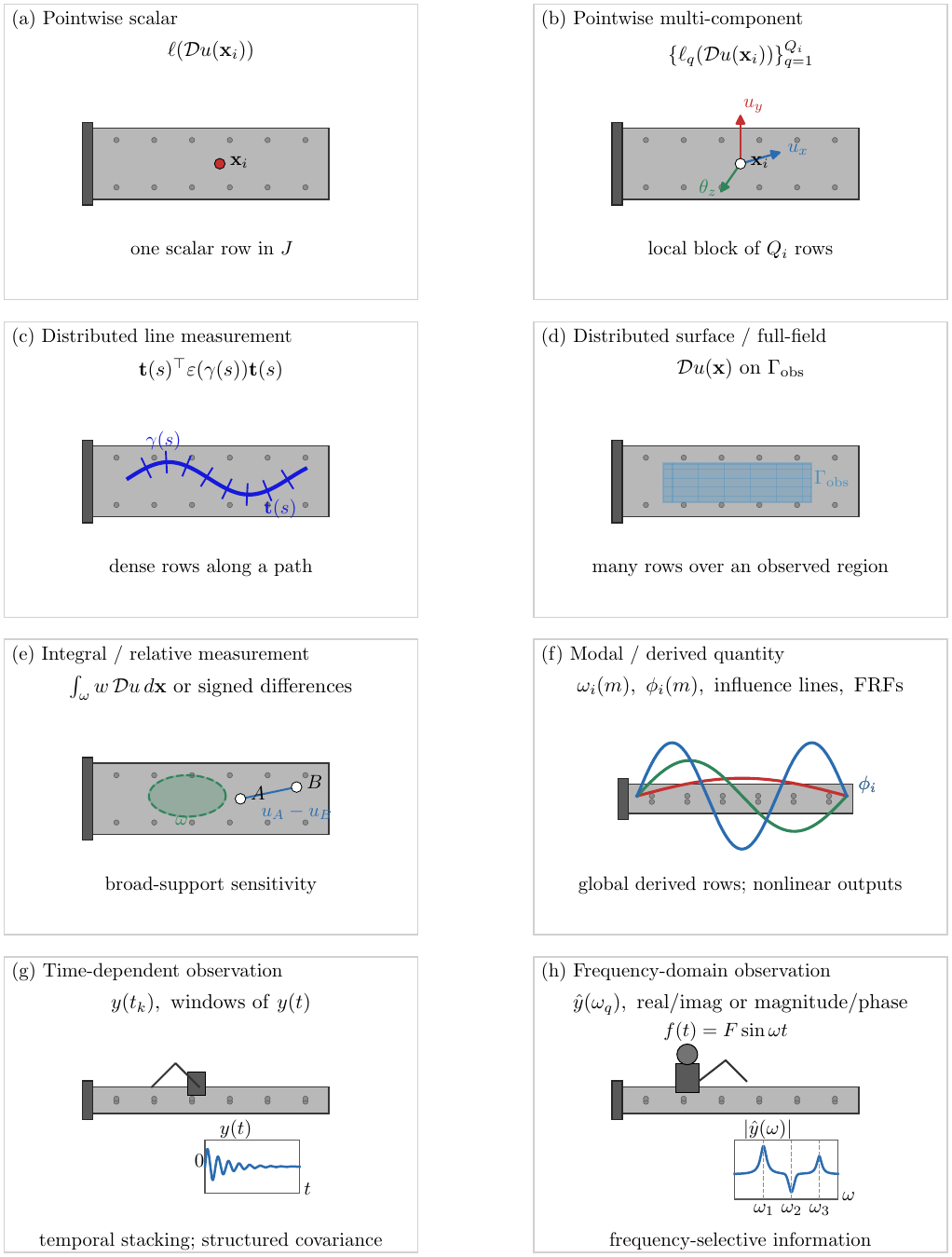}
\caption{Conceptual examples of observation classes in the local information-operator framework. 
Each panel illustrates a different way in which the mechanical state \(u\), or a derived quantity of \(u\), can be sampled to form rows of the local Jacobian \(J\). 
Pointwise measurements contribute localized rows, distributed line and surface measurements provide spatially extended rows, integral and relative measurements produce broad-support sensitivities, and modal, influence-line, time-domain, and frequency-domain observations contribute derived or stacked rows. 
All observation classes enter the same local information operator, but their Jacobian rows induce different geometry in the informed parameter subspace.}
\label{fig:observation_classes_conceptual}
\end{figure}

\begin{table}[!htbp]
\footnotesize
\centering
\caption{Representative observation classes grouped by local Jacobian structure.}
\label{tab:observation_taxonomy}

\newcommand{\tabragged}{\raggedright\arraybackslash}
\begin{tabularx}{\textwidth}{
	>{\tabragged}p{0.13\textwidth}
	>{\tabragged}p{0.24\textwidth}
	>{\tabragged}p{0.22\textwidth}
>{\tabragged}X}
\toprule
Observation class & Example sensors or data & Local observation form & Effect on \(\mathcal I\) \\
\midrule

Pointwise scalar &
LVDT, laser displacement sensor, single strain gauge, tiltmeter, accelerometer channel &
\(\ell(\mathcal D u(\mathbf x_i))\) &
One scalar; rank-one contribution under independent scalar noise; localized and direction-selective.\\

\addlinespace[3pt]
\midrule

Pointwise multi-component &
Triaxial accelerometer, rosette strain gauge, multi-axis displacement sensor, local displacement--rotation measurement &
\(\{\ell_q(\mathcal D u(\mathbf x_i))\}_{q=1}^{Q_i}\) &
\(Q_i\) rows; local block contribution resolving multiple state components at one location. \\

\addlinespace[3pt]
\midrule

Distributed line measurement &
Distributed fiber-optic strain sensing, long-gauge strain sensing, scanned line displacement/strain measurement &
\(\mathbf t(s)^{\!\top}\varepsilon(\gamma(s))\mathbf t(s)\) &
After spatial sampling, \(N_\gamma\) rows; extended coverage along a path, directionally constrained by gauge orientation. \\

\addlinespace[3pt]
\midrule

Distributed surface or full-field data &
Digital image correlation, laser scanning, photogrammetry, full-field displacement or strain mapping&
\(\mathcal D u(\mathbf x)\) on \(\Gamma_{\mathrm{obs}}\) &
After spatial sampling, \(qN_{\mathrm{obs}}\) rows; potentially high-rank observation, still limited by mechanics and noise. \\

\addlinespace[3pt]
\midrule

Integral or relative measurement &
Load-cell resultant, average strain gauge, relative displacement transducer, crack-opening gauge, bridge weigh-in-motion response &
\(\int_{\Omega_{\mathrm{obs}}} w\,\mathcal D u\,d\mathbf x\), or signed differences &
Broad-support sensitivity patterns; useful for global or averaged features. \\

\addlinespace[3pt]
\midrule

Modal or derived quantity &
Modal frequencies, mode shapes, modal curvatures, influence lines, identified flexibility or stiffness features &
\(\omega_i(m)\), \(\phi_i(m)\), modal curvatures, influence-line ordinates &
Stacked rows; global sensitivities; nonlinear output quantity requiring local linearization. \\

\addlinespace[3pt]
\midrule

Time-dependent observation &
Accelerometer time histories, strain time histories, displacement records, transient impact or moving-load responses &
\(y(t_k)\), windows of \(y(t)\), transient response &
\(N_t\) rows; temporal stacking; covariance may be strongly structured. \\

\addlinespace[3pt]
\midrule

Frequency-domain observation &
Frequency-response functions, transfer functions, spectral amplitudes, operational modal spectra &
\(\hat y(\omega_q)\), real--imaginary or magnitude--phase data &
\(N_\omega\) rows; frequency-selective information; modal and antiresonance structure affects visibility. \\

\bottomrule
\end{tabularx}
\end{table}

One scalar observation contributes one row to \(J\) and therefore one rank-one information
term. Distributed and full-field observations contribute many rows, but they do not
automatically eliminate nullspaces created by the mechanics, the parameterization, or the
noise model. The observation class changes not only the amount of information but also the
geometry of the informed parameter subspace.

\subsection{Complementarity, redundancy, and weak-direction gain}

For an observation block \(b\), define
\begin{equation}
\mathcal I_b = J_b^{\!\top}R_b^{-1}J_b.
\label{eq:sensor_information}
\end{equation}
Its locally informed subspace is
\begin{equation}
\mathcal V_b
:=
\operatorname{range}(\mathcal I_b)
=
\operatorname{range}(J_b^{\!\top}).
\label{eq:informed_subspace}
\end{equation}
For conditionally independent blocks,
\begin{equation}
\mathcal V_{\mathrm{tot}}
=
\operatorname{range}(\mathcal I_{\mathrm{tot}})
=
\sum_b \mathcal V_b.
\label{eq:total_subspace}
\end{equation}
Adding an observation block can enlarge the informed subspace or strengthen directions
already present, but it cannot remove information in the positive-semidefinite ordering.

This motivates three subspace-level diagnostics:
\begin{itemize}[label=--]
\item \emph{Complementarity}: a candidate observation is complementary when its
information contribution has substantial projection onto directions weakly informed by the
existing observation set.
\item \emph{Redundancy}: a candidate observation is redundant when its dominant
information lies mostly in directions already strongly informed.
\item \emph{Weak-direction gain}: the value of a new observation is measured by the
information it adds in currently weak directions, not only by its total trace or
determinant contribution.
\end{itemize}

The idea is close in spirit to goal-oriented and Bayesian experimental design, where the
value of an experiment is judged through its effect on a target uncertainty or expected
utility~\cite{attiaGoalorientedOptimalDesign2018,huanOptimalExperimentalDesign2024}.
Here, however, the target is a diagnostic subspace of the existing local information
operator: the question is whether a candidate observation improves currently weak
mechanics-field patterns.

Let \(\Pi_{\mathrm{weak}}\) denote the projector onto a chosen weak subspace of the current
information operator, defined with respect to the relevant parameter-space metric or prior.
For example, \(\Pi_{\mathrm{weak}}\) may be constructed from the eigenvectors associated
with small nonzero eigenvalues of the metric-weighted or prior-preconditioned operator.
In a prior-preconditioned setting, this may be implemented by selecting modes with
\(\lambda_i^{\mathrm{pr}}<\tau_{\mathrm{weak}}\), or by taking the complement of the retained
likelihood-informed subspace, with the projector formed in the corresponding
prior-whitened or \(Q_{\mathrm{pr}}\)-orthonormal coordinates.
The following expressions should be read in those coordinates, or with the corresponding
metric-adjoint projector in the original parameter coordinates.
The weak-direction gain operator of a candidate block \(b\) relative to an existing
observation set \(\mathcal S\) is
\begin{equation}
G_{\mathrm{weak}}(b\mid\mathcal S)
=
\Pi_{\mathrm{weak}}\mathcal I_b\Pi_{\mathrm{weak}}.
\label{eq:weak_direction_gain}
\end{equation}
A scalar summary is
\begin{equation}
g_{\mathrm{weak}}(b\mid\mathcal S)
=
\operatorname{tr}\!\left(G_{\mathrm{weak}}(b\mid\mathcal S)\right).
\label{eq:weak_direction_scalar_gain}
\end{equation}
This trace is understood in the same metric or prior-whitened coordinates used to define
\(\Pi_{\mathrm{weak}}\); otherwise its numerical value is coordinate dependent.

In Bayesian design settings, the same construction can be applied to the blockwise
analogue of the prior-preconditioned operator in
\cref{eq:prior_preconditioned_eigenproblem}, namely
\(C_{\mathrm{pr}}^{1/2}\mathcal I_b C_{\mathrm{pr}}^{1/2}\), so that weak-direction gain
is measured relative to prior-plausible perturbations. These diagnostics do not replace
classical A-, D-, or E-optimal design criteria; they expose whether a candidate
observation fills current information gaps or mainly reinforces already visible
directions.

If \(\mathcal I\) is positive definite, it defines the information ellipsoid
\begin{equation}
\mathcal E(\mathcal I)
=
\{\delta m:\delta m^{\!\top}\mathcal I\delta m\le 1\}.
\label{eq:information_ellipsoid}
\end{equation}
If \(\mathcal I\) is singular, the same set is an unbounded ellipsoidal cylinder whose
unbounded directions coincide with the local nullspace. A complementary observation
contracts the cylinder in directions where it is long; a redundant observation mainly
contracts directions already short. This ellipsoidal geometry is the geometric basis of
the proposed complementarity diagnostics.

\section{Discrepancy and nuisance parameters}
\label{sec:discrepancy}

\subsection{Covariance inflation}

Real mechanics models may differ from the physical system because of omitted physics,
uncertain boundary conditions, environmental variation, discretization error, load
uncertainty, or modelling assumptions. Following standard Bayesian model-discrepancy
treatments~\cite{kennedyBayesianCalibrationComputer2001,brynjarsdottirLearningPhysicalParameters2014},
a simple Gaussian discrepancy model can be written as

\begin{equation}
y=H(m;\xi)+\delta(\xi)+\eta,
\qquad
\eta\sim\mathcal N(0,R),
\qquad
\delta\sim\mathcal N(0,C_{\delta}),
\label{eq:discrepancy_model}
\end{equation}
with \(\delta\) independent of \(\eta\). If the discrepancy is treated as zero-mean
Gaussian, the effective covariance is
\begin{equation}
R_{\mathrm{eff}}=R+C_{\delta},
\label{eq:effective_covariance}
\end{equation}

and the information operator becomes
\begin{equation}
\mathcal I_{\mathrm{eff}}
=
J^{\!\top}R_{\mathrm{eff}}^{-1}J.
\label{eq:effective_information}
\end{equation}
\emph{Covariance inflation} reduces information in observation-space directions where model
discrepancy is large, accounting for additional uncertainty without eliminating systematic
bias or confounding.

This interpretation assumes a zero-mean discrepancy independent of the observation noise.
If model error is systematic or strongly correlated with the physical parameter field, for
example through unmodelled boundary friction, temperature effects, or support
nonlinearities, covariance inflation may only broaden the posterior. Such cases require
explicit discrepancy or nuisance-parameter modelling. The discrepancy representation is
itself consequential: overly flexible or poorly constrained discrepancy priors can
introduce additional confounding between physical parameters and model-error
terms~\cite{brynjarsdottirLearningPhysicalParameters2014,lingSelectionModelDiscrepancy2014a}.

\subsection{Schur-complement information loss}

A second mechanism is explicit \emph{nuisance-parameter coupling}. Let \(m\) denote the
parameter field of interest and \(n\) collect nuisance quantities such as load amplitudes,
support stiffnesses, temperature effects, mass-density uncertainty, or sensor calibration
parameters. Depending on the engineering question, such quantities may instead be inferred
directly. With local Jacobians \(J_m\) and \(J_n\), the joint information matrix is
\begin{equation}
\mathcal I_{\mathrm{joint}}
=
\begin{bmatrix}
J_m^{\!\top}R^{-1}J_m & J_m^{\!\top}R^{-1}J_n\\
J_n^{\!\top}R^{-1}J_m & J_n^{\!\top}R^{-1}J_n
\end{bmatrix}
=
\begin{bmatrix}
\mathcal I_{mm} & \mathcal I_{mn}\\
\mathcal I_{nm} & \mathcal I_{nn}
\end{bmatrix}.
\label{eq:joint_information}
\end{equation}

When the nuisance variables are unknown and are locally eliminated or marginalized in the
Gaussian approximation, the effective information for \(m\) is the \emph{Schur complement}
\begin{equation}
\mathcal I_{m\mid n}
=
\mathcal I_{mm}
-
\mathcal I_{mn}\mathcal I_{nn}^{\dagger}\mathcal I_{nm}.
\label{eq:schur_complement}
\end{equation}
Here \(\mathcal I_{nn}^{\dagger}\) denotes a generalized inverse when the nuisance block
is singular. In that case, \cref{eq:schur_complement} is a generalized Schur complement:
it should be used when the relevant range condition holds, or after restricting \(n\) to
its locally identifiable subspace. Equivalently, it represents the information remaining
after projecting out locally identifiable nuisance-sensitive directions.
With Gaussian priors on \(m\) and \(n\), the same operation should be applied to the
joint posterior precision rather than to the likelihood information alone.

When the Schur complement is well defined,
\begin{equation}
\mathcal I_{m\mid n}\preceq \mathcal I_{mm},
\label{eq:information_loss_order}
\end{equation}
so nuisance coupling cannot increase the information available for \(m\). Unlike simple
covariance inflation, however, nuisance elimination can rotate the identifiable subspace:
directions of \(m\) strongly aligned with nuisance sensitivities lose information
selectively. This provides a local diagnostic for confounding between damage, loading,
boundary conditions, environmental effects, and material parameters.

For example, a localized stiffness-loss field may be partially confounded with uncertain
load amplitude, support flexibility, or temperature-induced global softening if these
nuisance quantities produce observation-space sensitivities aligned with the damage
sensitivities. In such cases, the off-diagonal block \(\mathcal I_{mn}\) removes from the
damage field precisely those components that cannot be distinguished from the nuisance
effects by the chosen observations.

\section{Matrix-free operator evaluation}
\label{sec:matrix_free}

\subsection{Operator actions}
For large finite-element and PDE-constrained mechanics models, the local
information operator need not be assembled explicitly. The required
computational primitive is the \emph{matrix-free action}
\begin{equation}
v
\mapsto
\mathcal I v
=
J^{\!\top}\!\left(R^{-1}(Jv)\right).
\label{eq:matrix_free_action}
\end{equation}

Here \(v\) is a parameter-field perturbation, \(Jv\) is the corresponding linearized
observation perturbation, and \(J^{\!\top}\) maps a weighted observation-space quantity
back to parameter space. The action can be evaluated whenever tangent and adjoint actions
of the parameter-to-observation map are available. In PDE and finite-element mechanics,
\(Jv\) is obtained from a tangent solve, \(R^{-1}(Jv)\) is a covariance-weighting
operation, and \(J^{\!\top}w\) is obtained from an adjoint solve. In differentiable
computational graphs, the same action is computed by Jacobian-vector and vector-Jacobian
products.

The matrix-free action itself is standard in PDE-constrained optimization,
Bayesian inverse problems, and optimal experimental design
\cite{flathFastAlgorithmsBayesian2011,alexanderianFastScalableMethod2016,
antilFisherInformationBasedSensorPlacement2026}. Here it provides scalable access to
information modes, prior-relative modes, weak-direction gain, trace estimates, and
diagonal information-density estimates. The sensitivity matrix need not be stored; memory
requirements are governed by state, adjoint, and covariance-solve operations rather than
by the product of the number of parameter degrees of freedom and the number of
observations.

\subsection{Eigenmode extraction}

Once the action~\eqref{eq:matrix_free_action} is available, Krylov, Lanczos,
randomized singular value decomposition (SVD), or randomized eigensolvers can be used to
extract dominant information modes without forming the full matrix. For sensor or
experiment design, the same actions compare candidate observation blocks through their
effect on leading modes, weak modes, trace estimates, log-determinant approximations, or
posterior variance reduction.

The same matrix-free machinery can also be used for weak-direction gain. Given a basis
\(W\) for a weak subspace, the projected candidate contribution is
\begin{equation}
W^{\!\top}\mathcal I_b W,
\label{eq:weak_projected_matrix}
\end{equation}
which can be evaluated using only actions of the candidate-block operator \(\mathcal I_b\)
on the weak basis vectors, avoiding full assembly for every candidate sensor or
experiment.

When only diagonal information densities are required, they can be obtained by assembling
local sensitivity contributions where feasible, or estimated matrix-free using probing or
randomized diagonal estimators applied to \(\mathcal I\). They remain
coordinate-dependent diagnostics; coupling and cancellation require the operator or
spectral analyses above.

\subsection{Prior-preconditioned actions}

For prior-preconditioned modes, the required action is the matrix-free form of the
operator in
\cref{eq:prior_preconditioned_eigenproblem,eq:prior_whitened_information_operator}:
\begin{equation}
v
\mapsto
C_{\mathrm{pr}}^{1/2}J^{\!\top}R^{-1}JC_{\mathrm{pr}}^{1/2}v.
\label{eq:prior_preconditioned_action}
\end{equation}
This requires, in addition to tangent and adjoint actions, the ability to apply
\(C_{\mathrm{pr}}^{1/2}\) or an equivalent prior-whitening transformation. When a prior
precision \(Q_{\mathrm{pr}}\) is more accessible than \(C_{\mathrm{pr}}^{1/2}\), the
generalized eigenproblem~\eqref{eq:generalized_eigenproblem} can be solved directly using
operators for \(\mathcal I\) and \(Q_{\mathrm{pr}}\).

Reduced-order models, surrogate models, or automatic-differentiation implementations may
reduce the cost of repeated tangent and adjoint actions. These approximations change how
the action is evaluated, not the diagnostic object itself: the approximated quantity
remains the local information operator and its metric- or prior-preconditioned variants.

In very large three-dimensional models, overall scalability also depends on
the mechanics solver, adjoint implementation, covariance model, and prior
preconditioner. Thus the framework is matrix-free at the level of
\(J^{\!\top}R^{-1}J\), but the generalized eigenproblem may still require
scalable prior or covariance actions.

\section{Examples and applications}
\label{sec:examples_applications}

The examples are chosen to expose different aspects of the same operator-level
assessment at increasing levels of mechanical and observational complexity. The simply
supported beam gives a closed-form kernel; the two-span beam shows support-induced
spatial coupling; the static--dynamic benchmark illustrates heterogeneous information
fusion; and the two-dimensional damage example uses leading information modes as a
reconstruction subspace.

\subsection{A simply supported beam under a moving point load}

To show that the framework can yield interpretable closed-form structure, consider a
simply supported Euler--Bernoulli beam of length \(L\) with a rotation sensor at
\(r \in (0,L)\). Each factor in the information operator has a mechanical meaning: the
primal bending-moment field describes the moving-load excitation, the adjoint influence
field describes the rotation measurement, and their product gives the local sensitivity to
compliance perturbations.
Let \(z\) denote the position of a moving point load \(P\). Let \(v(x)=1/EI(x)\) denote the flexural compliance field, and let \(\delta v(x)\) be a small compliance perturbation.
Introduce the normalized coordinates
\[
s=\frac{x}{L},\qquad
\bar s=\frac{\bar x}{L},\qquad
\zeta=\frac{z}{L},\qquad
\rho=\frac{r}{L}.
\]
The linearized rotation response takes the form
\begin{equation}
\begin{aligned}
\delta \theta(r;z)
&=
\int_0^L K(r,z;x)\,\delta v(x)\,dx,\\
K(r,z;x)
&=
\mu_\rho(s)\,M(x;z)
=
PL\,\mu_\rho(s)\mathcal M(s;\zeta),
\end{aligned}
\label{eq:beam_kernel}
\end{equation}
where \(M(x;z)\) is the bending-moment field generated by the moving point load and \(\mu_\rho(s)\) is the piecewise-affine adjoint bending-moment influence function associated with a unit positive rotation measurement at \(\rho\), up to the chosen rotation sign convention.

Equivalently, \(\mu_\rho\) is obtained by applying the rotation measurement functional as an adjoint load; the product form in \cref{eq:beam_kernel} follows from virtual work for a compliance perturbation.
For the simply supported beam,
\begin{equation}
\mu_\rho(s)=
\begin{cases}
-s, & s < \rho,\\[4pt]
1-s, & s > \rho,
\end{cases}
\qquad
\mathcal M(s;\zeta)=
\min\{s,\zeta\}\,[1-\max\{s,\zeta\}],
\label{eq:beam_influence_functions}
\end{equation}
with \(M(x;z)=PL\,\mathcal M(s;\zeta)\). The function \(\mathcal M\) is the normalized
triangular moving-load bending-moment influence line, equivalently the Green's function
of \(-d^2/ds^2\) on \([0,1]\) with homogeneous Dirichlet boundary conditions.

For a finite set of load positions \(z_k\), independent rotation measurements with noise
variance \(\sigma_k^2\) give the discrete information kernel
\begin{equation}
\mathcal I_r^v(x,\bar x)
=
\sum_{k}
\frac{1}{\sigma_k^2}
K(r,z_k;x)K(r,z_k;\bar x).
\label{eq:beam_discrete_information_kernel}
\end{equation}
The continuous sweep used below should be read as a weighted design-measure limit of
this finite observation model, not as an assumption of independent noise at
uncountably many load positions. Equivalently, a quadrature-normalized discrete model
would replace \(1/\sigma_k^2\) in \cref{eq:beam_discrete_information_kernel} by
\(w_k/\sigma_k^2\), giving a continuum weight \(d\mu_\sigma(z)\). The dense uniform
convention used in the closed-form formulas below corresponds to
\(d\mu_\sigma(z)=dz/\sigma^2\), so the associated continuous local information
operator on compliance perturbations has kernel

\begin{equation}
\mathcal I_r^v(x,\bar x)
=
\frac{1}{\sigma^2}\int_0^L K(r,z;x)\,K(r,z;\bar x)\,dz.
\label{eq:beam_information_kernel}
\end{equation}

\begin{proposition}[Continuous information kernel]
\label{prop:beam_full_kernel_closed_form}
Assume a dense uniform sweep of load positions over \(z \in [0,L]\). Let
\[
\alpha=\min(s,\bar s),
\qquad
\beta=\max(s,\bar s).
\]
Then the moment-product integral in \cref{eq:beam_information_kernel} is
\begin{equation}
\int_0^L M(x;z)M(\bar x;z)\,dz
=
\frac{P^2L^3}{6}\,
\alpha(1-\beta)\left(2\beta-\alpha^2-\beta^2\right),
\label{eq:beam_moment_product_integral}
\end{equation}
and the full Fisher-information kernel for compliance perturbations is
\begin{equation}
\mathcal I_r^v(x,\bar x)
=
\frac{P^2L^3}{6\sigma^2}\,
\mu_\rho(s)\mu_\rho(\bar s)\,
\alpha(1-\beta)\left(2\beta-\alpha^2-\beta^2\right).
\label{eq:beam_full_kernel_closed_form}
\end{equation}
\end{proposition}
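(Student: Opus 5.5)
We use the product structure of \cref{eq:beam_kernel}. Since \(K(r,z;x)=\mu_\rho(s)\,M(x;z)\) and \(\mu_\rho\) does not depend on the load position \(z\), the factor \(\mu_\rho(s)\mu_\rho(\bar s)\) comes out of the \(z\)-integral in \cref{eq:beam_information_kernel}. So \cref{eq:beam_full_kernel_closed_form} follows from \cref{eq:beam_moment_product_integral} after multiplying by \(\mu_\rho(s)\mu_\rho(\bar s)/\sigma^2\). The whole task therefore reduces to evaluating the moment-product integral.

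For that integral, substitute \(M(x;z)=PL\,\mathcal M(s;\zeta)\) and \(dz=L\,d\zeta\). This gives
\[
\int_0^L M(x;z)M(\bar x;z)\,dz = P^2L^3\int_0^1 \mathcal M(s;\zeta)\,\mathcal M(\bar s;\zeta)\,d\zeta .
\]
The integrand is symmetric in \((s,\bar s)\), because \(\mathcal M\) is a Green's function. We may therefore assume \(s=\alpha\le\beta=\bar s\) without loss of generality.

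We then split \([0,1]\) at the two kinks \(\zeta=\alpha\) and \(\zeta=\beta\), where the \(\min/\max\) in \cref{eq:beam_influence_functions} change branch. On each piece the integrand is a polynomial:
\begin{itemize}[label=--]
\item on \(\zeta<\alpha\) it is \(\zeta(1-\alpha)\,\zeta(1-\beta)\), which integrates to \((1-\alpha)(1-\beta)\alpha^3/3\);
\item on \(\alpha<\zeta<\beta\) it is \(\alpha(1-\zeta)\,\zeta(1-\beta)\), which integrates to \(\alpha(1-\beta)\bigl[(\beta^2-\alpha^2)/2-(\beta^3-\alpha^3)/3\bigr]\);
\item on \(\zeta>\beta\) it is \(\alpha\beta(1-\zeta)^2\), which integrates to \(\alpha\beta(1-\beta)^3/3\).
\end{itemize}

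The only real work is the algebraic simplification, which is routine but error-prone. Each of the three pieces contains the common factor \(\alpha(1-\beta)\). After extracting it and multiplying the bracket by \(6\), the bracket becomes
\[
2\alpha^2-2\alpha^3+3\beta^2-3\alpha^2-2\beta^3+2\alpha^3+2\beta-4\beta^2+2\beta^3 .
\]
Here the cubic terms in \(\alpha\) and in \(\beta\) cancel, leaving \(2\beta-\alpha^2-\beta^2\). This yields \cref{eq:beam_moment_product_integral}. The degenerate cases \(s=\bar s\) and endpoints are covered by continuity.

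Finally, reinserting \(\mu_\rho(s)\mu_\rho(\bar s)\) and dividing by \(\sigma^2\) gives \cref{eq:beam_full_kernel_closed_form}. The discontinuity of \(\mu_\rho\) at \(\rho\) plays no role, since it never enters the \(z\)-integration.
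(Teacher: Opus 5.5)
Your proposal is correct and takes the same route as the paper. You assume \(s\le\bar s\), split the normalized \(\zeta\)-integral at \(\zeta=s\) and \(\zeta=\bar s\) into the same three polynomial pieces, and pull \(\mu_\rho(s)\mu_\rho(\bar s)/\sigma^2\) out of the \(z\)-integral; your explicit simplification, which the paper omits, checks out. The only thing to fix is a minor justification: the integrand \(\mathcal M(s;\zeta)\mathcal M(\bar s;\zeta)\) is symmetric in \((s,\bar s)\) simply because the product commutes, so appealing to the Green's-function symmetry of \(\mathcal M\) is unnecessary.
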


\begin{proof}
Assume without loss of generality that \(s\le \bar s\), so that \(\alpha=s\) and
\(\beta=\bar s\). Splitting the normalized load-position integral at \(\zeta=s\) and
\(\zeta=\bar s\) gives
\begin{align}
\int_0^L M(x;z)M(\bar x;z)\,dz
&=
P^2L^3\bigg[
(1-s)(1-\bar s)
\int_0^s \zeta^2\,d\zeta \notag\\
&\qquad
+s(1-\bar s)
\int_s^{\bar s} \zeta(1-\zeta)\,d\zeta
+s\bar s\int_{\bar s}^1 (1-\zeta)^2 d\zeta
\bigg] \notag\\
&=
\frac{P^2L^3}{6}\,
\alpha(1-\beta)\left(2\beta-\alpha^2-\beta^2\right).
\end{align}
The result is symmetric in \(x\) and \(\bar x\). Substitution into
\cref{eq:beam_information_kernel} yields \cref{eq:beam_full_kernel_closed_form}.
\end{proof}

Its diagonal,
\begin{equation}
\mathcal I_{r,\mathrm{diag}}^v(x) := \mathcal I_r^v(x,x),
\label{eq:beam_diagonal_density_def}
\end{equation}
defines a pointwise information density for compliance perturbations in the chosen
coordinate representation; correlated spatial perturbation patterns are governed by the
full kernel \(\mathcal I_r^v(x,\bar x)\).

The kernel in \cref{eq:beam_full_kernel_closed_form} need not be pointwise nonnegative,
because \(\mu_\rho(s)\mu_\rho(\bar s)\) changes sign across the sensor location. This does not
contradict positive semidefiniteness: positivity applies to the quadratic form
\(\int\!\int \delta v(x)\mathcal I_r^v(x,\bar x)\delta v(\bar x)\,dx\,d\bar x\), not to
each kernel value. The normalized kernel in \cref{fig:simple_beam_kernel} makes this
distinction visible: the diagonal gives a local magnitude, whereas the off-diagonal
lobes and sign changes determine how separated stiffness perturbations reinforce or
cancel each other in the measured rotation response.

If the parameter of interest is flexural rigidity \(EI(x)\) rather than compliance, the chain rule gives
\[
\frac{\partial v}{\partial EI}=-\frac{1}{EI^2}=-v^2,
\]
and therefore
\begin{equation}
\mathcal I_r^{EI}(x,\bar x)
=
v(x)^2\,\mathcal I_r^v(x,\bar x)\,v(\bar x)^2.
\label{eq:beam_ei_kernel_transform}
\end{equation}
Thus the pointwise information density for \(EI\) identification is \(\mathcal I_{r,\mathrm{diag}}^{EI}(x)=v(x)^4\mathcal I_{r,\mathrm{diag}}^v(x)\), with the multiplicative factor \(v(x)^4=EI(x)^{-4}\).

\begin{proposition}[Diagonal information density]
\label{prop:beam_closed_form}
Assume a dense uniform sweep of load positions over \(z \in [0,L]\).
Then the diagonal of the continuous information kernel is
\begin{equation}
\mathcal I_{r,\mathrm{diag}}^v(x)
=
\frac{1}{\sigma^2}\int_0^L K(r,z;x)^2\,dz
=
\frac{P^2L^3}{3\sigma^2}\,\mu_\rho(s)^2\,s^2(1-s)^2,
\qquad 0<s<1.
\label{eq:beam_information_density}
\end{equation}
Equivalently, in normalized form,
\begin{equation}
\mathcal I_{\rho,\mathrm{diag}}^v(s)
=
\kappa_v\,s^2(1-s)^2\mu_\rho(s)^2,
\label{eq:beam_normalized_density}
\end{equation}
where \(\kappa_v>0\) absorbs physical and sampling-convention factors.
\end{proposition}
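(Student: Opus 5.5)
The diagonal is obtained by specializing \cref{prop:beam_full_kernel_closed_form} to \(\bar x = x\); an independent check comes from integrating \(K^2\) directly.

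\textbf{Step 1 (specialization).} Set \(\bar s=s\), so \(\alpha=\beta=s\). The factor in \cref{eq:beam_moment_product_integral} becomes
\[
s(1-s)\bigl(2s-2s^2\bigr)=2s^2(1-s)^2 .
\]
Hence
\[
\int_0^L M(x;z)^2\,dz=\frac{P^2L^3}{3}\,s^2(1-s)^2 .
\]
Since \(K(r,z;x)=\mu_\rho(s)M(x;z)\) and \(\mu_\rho(s)\) does not depend on \(z\), the factor \(\mu_\rho(s)^2\) comes out of the \(z\)-integral. Dividing by \(\sigma^2\) then gives \cref{eq:beam_information_density}.

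\textbf{Step 2 (direct check).} Compute \(\int_0^1 \mathcal M(s;\zeta)^2\,d\zeta\), splitting at \(\zeta=s\):
\[
(1-s)^2\int_0^s\zeta^2\,d\zeta+s^2\int_s^1(1-\zeta)^2\,d\zeta
=\frac{(1-s)^2s^3+s^2(1-s)^3}{3}
=\frac{s^2(1-s)^2}{3}.
\]
The change of variables \(dz=L\,d\zeta\) together with \(M=PL\,\mathcal M\) supplies the factor \(P^2L^3\), so the result agrees with Step 1.

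\textbf{Step 3 (normalized form and range of validity).} Setting \(\kappa_v=P^2L^3/(3\sigma^2)\) gives \cref{eq:beam_normalized_density}; under another sampling weight \(d\mu_\sigma\) only this constant changes. The formula is restricted to \(0<s<1\) because \(\mu_\rho\) is defined piecewise and jumps at \(s=\rho\). Its square is nevertheless well defined away from \(s=\rho\), and the single point \(s=\rho\) has measure zero for the kernel.

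The only delicate point is this careful handling of the sensor location and the limiting convention of the continuous sweep. The algebra itself is routine.
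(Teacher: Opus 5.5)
Your proof is correct. Step 2 is exactly the paper's argument: the paper writes \(M=PL\,\mathcal M\), splits the \(\zeta\)-integral at \(\zeta=s\) into \((1-s)^2\int_0^s\zeta^2\,d\zeta+s^2\int_s^1(1-\zeta)^2\,d\zeta\), obtains \(\tfrac{P^2L^3}{3}s^2(1-s)^2\), and multiplies by \(\mu_\rho(s)^2/\sigma^2\).

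Your Step 1, setting \(\alpha=\beta=s\) in \cref{eq:beam_moment_product_integral}, is a route the paper does not take. It is legitimate because \cref{prop:beam_full_kernel_closed_form} precedes this proposition. It makes the diagonal an immediate corollary and doubles as a cross-check on the three-piece off-diagonal integral. The paper's direct two-piece computation, by contrast, is self-contained and cannot inherit an error from the off-diagonal formula.

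Two side remarks in Step 3 are imprecise, though neither affects the result. First, the restriction \(0<s<1\) reflects the open beam domain; the formula vanishes at the supports anyway. It has nothing to do with the jump of \(\mu_\rho\) at the interior point \(s=\rho\), which is a separate matter handled by one-sided limits. Second, only a constant rescaling of the uniform sweep measure is absorbed into \(\kappa_v\). A non-uniform design weight \(d\mu_\sigma(z)\) would change the \(s^2(1-s)^2\) profile itself, not just the constant.
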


\begin{proof}
Using \(M(x;z)=PL\,\mathcal M(s;\zeta)\),
\begin{align}
\int_0^L M(x;z)^2\,dz
&=
P^2L^3\left[
(1-s)^2 \int_0^s \zeta^2\,d\zeta
+
s^2 \int_s^1 (1-\zeta)^2 d\zeta
\right] \notag\\
&=
\frac{P^2L^3}{3}s^2(1-s)^2.
\end{align}
Substituting into the definition of \(\mathcal I_{r,\mathrm{diag}}^v(x)\) yields \cref{eq:beam_information_density}.
The normalized form follows by absorbing the constant prefactor into \(\kappa_v\); its two
polynomial branches are implied by the two branches of \(\mu_\rho(s)\).
\end{proof}

\Cref{prop:beam_closed_form} explains several qualitative features that are easy to misread as numerical artifacts.
In the limiting case of a sensor at the support (\(\rho = 0\)), the interior maximum occurs at \(s = 1/3\).
In the limiting case of a sensor at the opposite support (\(\rho = 1\)), it occurs at \(s = 2/3\).
For an interior sensor, the pointwise information density has an analytic jump at the sensor location:
\begin{equation}
\frac{\mathcal I_{\rho,\mathrm{diag}}^v(\rho^+)}{\mathcal I_{\rho,\mathrm{diag}}^v(\rho^-)} = \frac{(1-\rho)^2}{\rho^2}.
\label{eq:beam_jump_ratio}
\end{equation}
The value at \(s=\rho\) is immaterial for the integral operator; the discontinuity refers
to one-sided limits of the ideal point-sensor model. Finite gauge length or regularized
observation operators would smooth this jump. The local visibility is therefore
asymmetric even in a simple benchmark, because the moving-load experiment and the rotation
influence field encode directional structure before any numerical discretization is
introduced.

Analytical state-observability-Gramian results for Euler--Bernoulli beams in
sensor-placement settings~\cite{braceSensorPlacementCantilever2022} corroborate the
intuition that sensor location strongly affects which patterns are visible. The present
density, however, is a parameter-level Fisher-information diagnostic rather than a
classical state-observability Gramian.

If the rotation-sensor weighting is removed, let \(K_0(x,\bar x)\) denote the kernel of the unweighted moment-product operator \(\mathcal K_0\):
\begin{equation}
K_0(x,\bar x)
=
\int_0^L M(x;z)M(\bar x;z)\,dz .
\label{eq:unweighted_moment_kernel}
\end{equation}

Equivalently, at the operator level in dimensional \(x\)-coordinates, let
\(A=-d^2/dx^2\) on \((0,L)\) with homogeneous Dirichlet boundary conditions, and let
\(\mathcal G_2=A^{-2}\) denote the unforced Green-product operator. Then
\begin{equation}
\mathcal K_0
=
\int_0^L M(\cdot;z)\otimes M(\cdot;z)\,dz
=
P^2\mathcal G_2
=
P^2 A^{-2}.
\label{eq:unweighted_moment_operator}
\end{equation}
Here \(f\otimes f\) denotes the rank-one operator
\((f\otimes f)g=f\langle f,g\rangle_{L^2(0,L)}\). The operator \(A\) acts in the
dimensional coordinate \(x\); using the normalized coordinate \(s=x/L\) gives the same
inverse-Laplacian structure with the corresponding powers of \(L\).

The corresponding eigenfunctions are the sine modes
\(\sin(n\pi s)\) in normalized coordinate, with eigenvalues proportional to
\((L/(n\pi))^4\), including the factor \(P^2/\sigma^2\) when the noise weighting is applied.
The actual one-sensor compliance-information operator is instead
\begin{equation}
\mathcal I_r^v
=
\frac{1}{\sigma^2}M_{\mu_\rho}\mathcal K_0M_{\mu_\rho}
=
\frac{P^2}{\sigma^2}\,M_{\mu_\rho}A^{-2}M_{\mu_\rho},
\label{eq:sensor_weighted_operator}
\end{equation}
where \(M_{\mu_\rho}\) denotes multiplication by \(\mu_\rho(s)\) after the coordinate
mapping \(x=Ls\). Thus \(A^{-2}\) denotes the unforced Green-product operator, while the
load factor \(P^2\) enters only once through \(\mathcal K_0\).
The resulting modes solve the compact self-adjoint Fredholm eigenproblem
\begin{equation}
\int_0^L \mathcal I_r^v(x,\bar x)\psi_i(\bar x)\,d\bar x
=
\lambda_i\psi_i(x),
\label{eq:beam_information_mode_fredholm}
\end{equation}
and, except in the unweighted case, they do not reduce to elementary sine functions.
The exact kernel nevertheless gives a semi-analytic benchmark.
With the normalized sine basis
\(\widehat\varphi_n(s)=\sqrt{2}\sin(n\pi s)\), the weighted operator has matrix entries
\begin{equation}
T_{ij}
=
\frac{P^2}{\sigma^2}
\sum_{n=1}^{\infty}
\left(\frac{L}{n\pi}\right)^4
B_{in}(\rho)B_{jn}(\rho),
\qquad
B_{in}(\rho)=\int_0^1 \widehat\varphi_i(s)\mu_\rho(s)\widehat\varphi_n(s)\,ds,
\label{eq:beam_sine_galerkin_modes}
\end{equation}
where \(B_{in}(\rho)\) is available in closed form because \(\mu_\rho\) is piecewise affine.
The modes plotted in \cref{fig:single_sensor_information_modes} are therefore
semi-analytic parameter-observability modes computed from the exact sensor-weighted
kernel. Their rapid spectral decay shows the effective low-dimensionality of this sensing
configuration, while the mode shapes show how the point-sensor influence line imprints a
localized change at the sensor position. In this and later numerical examples,
information modes are computed in the chosen discrete parameter coordinates and should be
interpreted as fixed-discretization diagnostics unless a metric- or prior-preconditioned
formulation is explicitly used.

\begin{figure}[!htbp]
\centering
\includegraphics[width=0.8\textwidth]{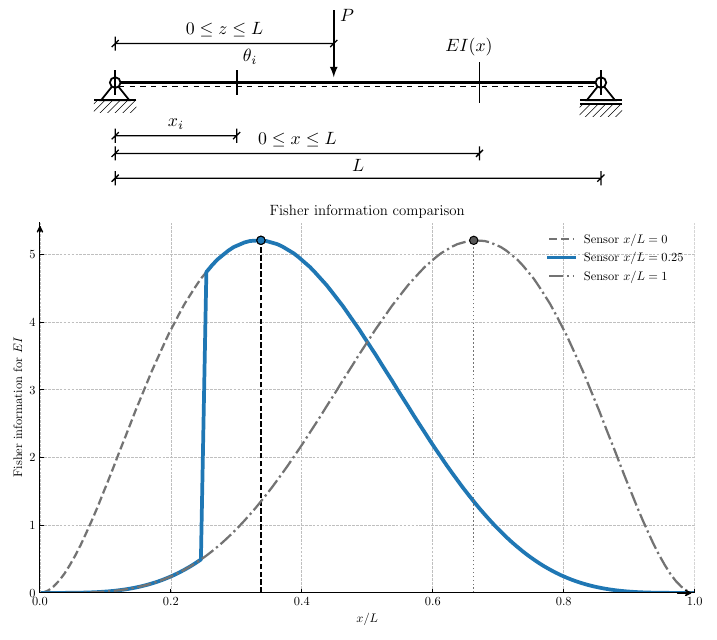}
\caption{Numerical rendering of the pointwise information density \(\mathcal I_{\rho,\mathrm{diag}}^v(s)\) from \cref{eq:beam_normalized_density} for a quarter-span rotation sensor (solid blue) and for support sensors (dashed gray). The jump discontinuity at \(s=\rho\) reflects the piecewise-affine influence field \(\mu_\rho\); finite sensor gauge length or regularized observation operators smooth this jump into a sharp transition. The right-shifted maximum for \(\rho=1/4\) illustrates how sensor position biases local visibility under a moving-load sweep.}
\label{fig:beam_density}
\end{figure}

\begin{figure}[!htbp]
\centering
\includegraphics[width=\textwidth]{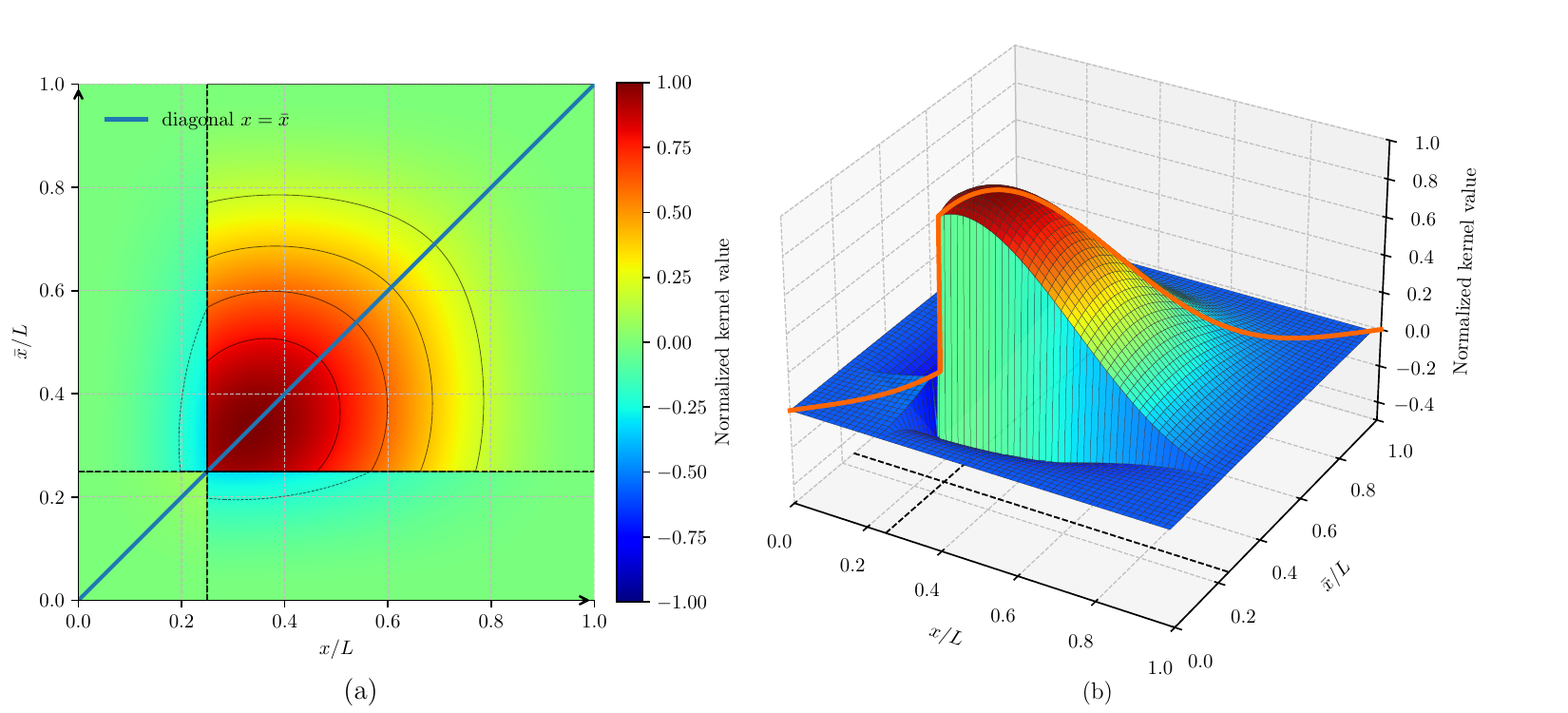}
\caption{Full local information kernel for a simply supported beam with a rotation sensor at \(\rho=0.25\). The plot shows the normalized \(EI\)-kernel \(\mathcal I_r^{EI}(x,\bar{x})\); for a uniform reference stiffness this has the same spatial pattern as the compliance kernel up to a positive constant factor. 
(a)~Contour plot of the normalized kernel with the diagonal \(x=\bar{x}\) (blue line), sensor position (dashed), and contour lines. 
(b)~Three-dimensional surface representation showing the kernel structure and the diagonal direction.
}
\label{fig:simple_beam_kernel}
\end{figure}

\begin{figure}[H]
\centering
\includegraphics[width=\textwidth]{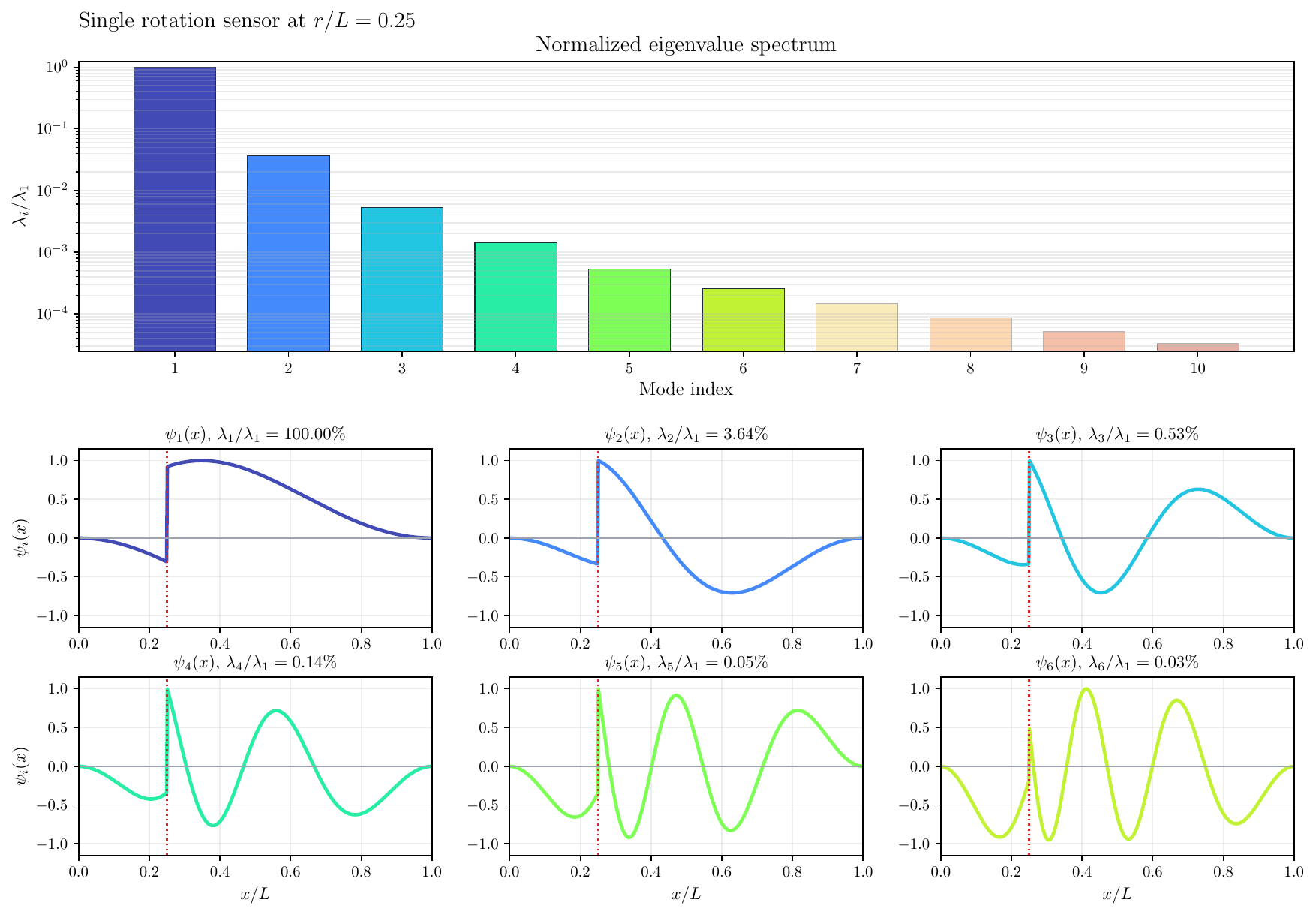}
\caption{Normalized spectrum and leading local parameter-observability modes for a single rotation sensor at \(\rho=0.25\), computed from the exact sensor-weighted information kernel. The normalized eigenvalue decay shows that most information is concentrated in a few parameter directions, while the plotted modes \(\psi_i(x)\) show increasingly weakly visible spatial patterns; the dashed vertical line marks the sensor position.}
\label{fig:single_sensor_information_modes}
\end{figure}

\FloatBarrier

\subsection{A two-span continuous beam under a moving point load}
\label{sec:two_span_beam}

As a second static example, consider a continuous Euler--Bernoulli beam with two equal
spans. Let \(L_s\) denote the span length, \(L_T=2L_s\) the total length, and let the
vertical supports be located at \(x=0\), \(x=L_s\), and \(x=2L_s\). For plotting, use the
span-normalized coordinate \(s=x/L_s\) and sensor coordinate \(\rho_s=r/L_s\). A rotation
sensor is placed in the first span, with the plotted full-kernel example using
\(\rho_s=0.25\), and the point load is swept over both spans. The linearized response has
the same product-kernel structure as \cref{eq:beam_kernel}, but over the two-span domain:
\begin{equation}
\delta\theta(r;z)
=
\int_0^{L_T}K(r,z;x)\,\delta v(x)\,dx,
\qquad
K(r,z;x)=\mu_r(x)M_z(x).
\label{eq:twospan_static_kernel}
\end{equation}
Here the influence fields \(M_z(x)\) and \(\mu_r(x)\) are those of the continuous beam rather than the simple triangular functions in \cref{eq:beam_influence_functions}. The continuous Fisher-information kernel is therefore
\begin{equation}
\mathcal I_r^v(x,\bar x)
=
\frac{1}{\sigma^2}
\int_0^{L_T}
\mu_r(x)M_z(x)\,
\mu_r(\bar x)M_z(\bar x)\,dz,
\qquad
\mathcal I_{r,\mathrm{diag}}^v(x)=\mathcal I_r^v(x,x).
\label{eq:twospan_continuous_kernel}
\end{equation}
The integral is interpreted with the same weighted design-measure convention as in the
simply supported beam example.

For \(EI(x)\) as the parameter, the compliance-kernel transformation in \cref{eq:beam_ei_kernel_transform} is applied pointwise in \(x\) and \(\bar x\).
In finite-element form, each moving-load position \(z_k\) defines a primal solve and the rotation sensor defines one adjoint solve,
\begin{equation}
\mathbf K_{\mathrm{FE}} u_{z_k}=f_{z_k},
\qquad
\mathbf K_{\mathrm{FE}}\lambda_r^{\mathrm{adj}}=\ell_r ,
\label{eq:multispan_primal_adjoint}
\end{equation}
where \(f_{z_k}\) is the nodal load vector and \(\ell_r\) represents the rotation measurement functional. With curvature operator \(B(x)\) and reference stiffness \(EI_0(x)\), the corresponding primal and adjoint bending-moment fields are
\begin{equation}
M_{z_k}(x)=EI_0(x)B(x)u_{z_k},
\qquad
\mu_r(x)=EI_0(x)B(x)\lambda_r^{\mathrm{adj}} .
\label{eq:multispan_moment_fields}
\end{equation}
Thus the local sensitivity for compliance is
\begin{equation}
K(r,z_k;x)=\mu_r(x)M_{z_k}(x),
\label{eq:multispan_sensitivity_kernel}
\end{equation}

Analogously to the discrete kernel in \cref{eq:beam_discrete_information_kernel}, the full Fisher-information kernel is assembled as
\begin{equation}
\mathcal I_r^v(x,\bar x)
\approx
\sum_{k=1}^{N_z}
\frac{w_k}{\sigma_k^2}\,
K(r,z_k;x)K(r,z_k;\bar x),
\qquad
\mathcal I_{r,\mathrm{diag}}^v(x)=\mathcal I_r^v(x,x),
\label{eq:multispan_discrete_kernel}
\end{equation}
where \(w_k\) are load-position quadrature weights.

\Cref{fig:two_span_fisher_information} shows the pointwise information density for the
equal-span continuous beam. The interior support redistributes bending moments, producing
sensor-dependent lobes and sharp changes in the sensitivity footprint rather than a
uniform coverage map. \Cref{fig:two_span_beam_kernel} shows the corresponding full kernel.
Its off-diagonal blocks diagnose whether stiffness perturbations in different regions or
spans produce correlated measurement signatures. Large-magnitude off-diagonal blocks
indicate strong spatial coupling, while weak coupling indicates better separation by the
sensor/load programme. Thus the kernel, not only its diagonal, is needed to assess span
coupling and pattern identifiability.

\begin{figure}[!htbp]
\centering
\includegraphics[width=0.8\textwidth]{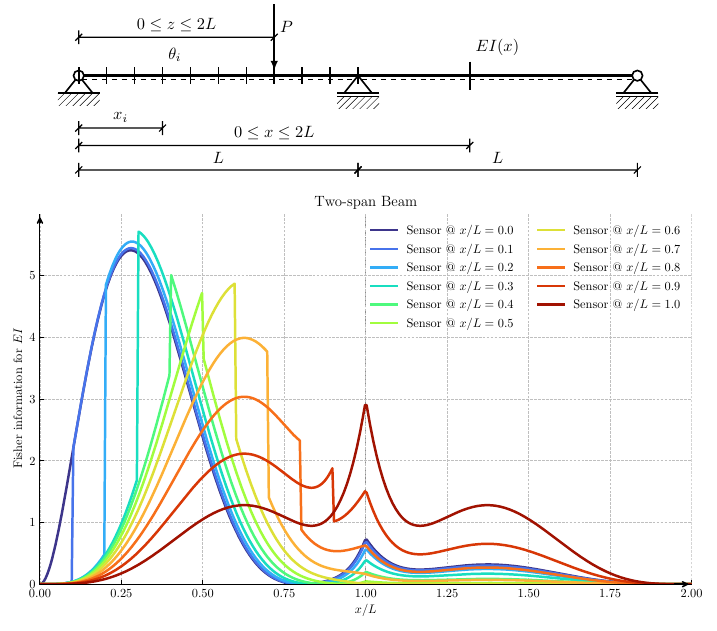}
\caption{Pointwise Fisher-information density \(\mathcal I_{r,\mathrm{diag}}^{EI}(x)\) for \(EI\) along a two-span continuous beam under moving-load excitation and rotation sensing. The colored curves show representative sensor positions along the two-span domain, the red envelope summarizes the maximum visibility obtained over the moving sensor sweep, and the vertical dashed line marks the interior support at \(s=1\).}
\label{fig:two_span_fisher_information}
\end{figure}

\begin{figure}[!htbp]
\centering
\includegraphics[width=\textwidth]{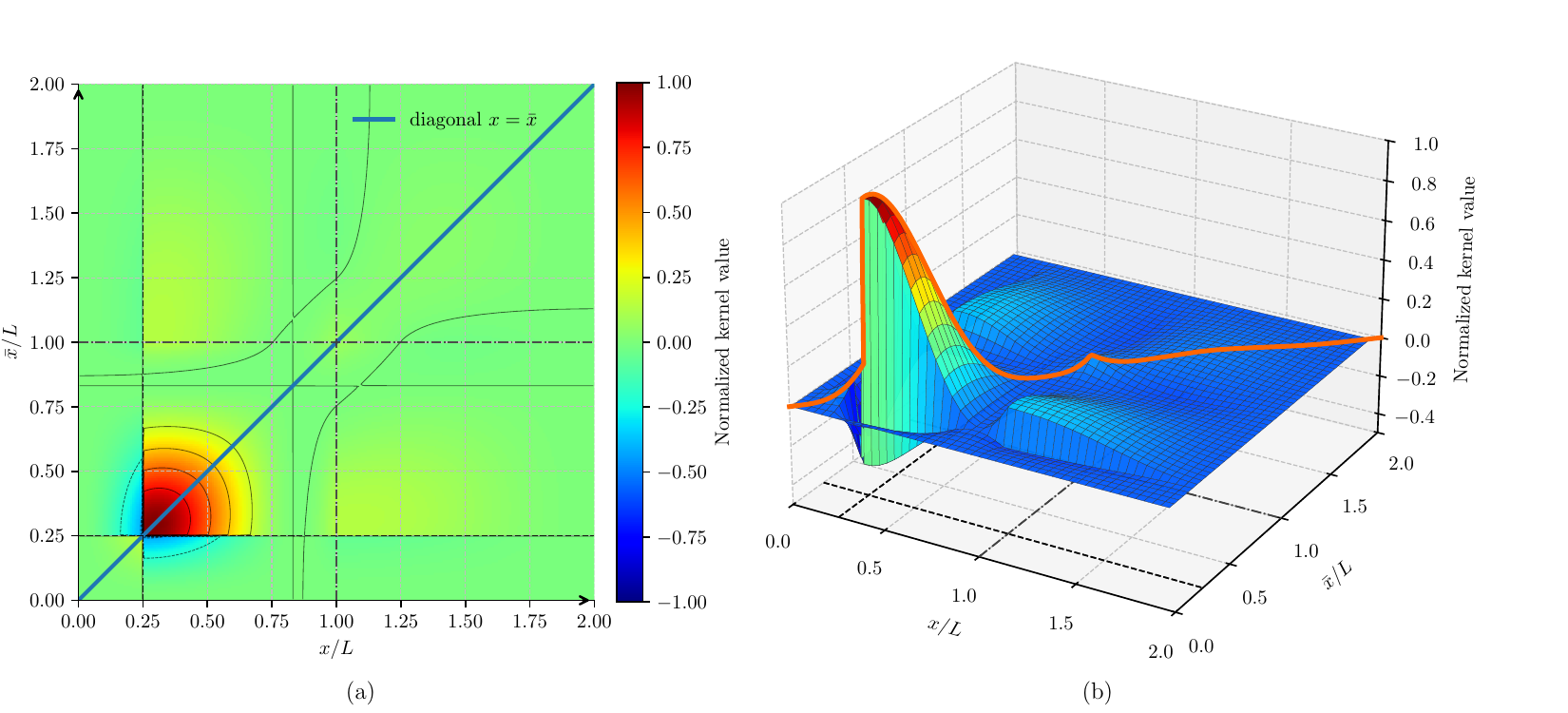}
\caption{Full local information kernel \(\mathcal I_r^{EI}(x,\bar{x})\) for a two-span continuous beam with a rotation sensor at \(\rho_s=0.25\) and an interior support at \(s=1\). 
(a)~Contour plot of the normalized kernel with the diagonal \(x=\bar{x}\) (blue line), sensor position (dashed), interior support (dash-dotted), and contour lines. 
(b)~Three-dimensional surface representation showing the kernel structure and the influence of the interior support.}
\label{fig:two_span_beam_kernel}
\end{figure}

\FloatBarrier

\subsection{Beam \texorpdfstring{\(EI\)}{EI} identification benchmark}
\label{sec:beam_ei_identification_benchmark}

The preceding examples diagnosed local visibility for idealized beam observations. We now
consider an inverse-identification benchmark for a damaged simply supported
Euler--Bernoulli beam. The unknown field is the spatially varying flexural rigidity
\(EI(x)\). A localized stiffness reduction is introduced in the right half of the span,
and the goal is to identify this damaged region from static, dynamic, and fused
static--dynamic observations.

The parameter field is represented in elementwise log-stiffness coordinates,
\begin{equation}
p_e=\log(EI_e/EI_0),
\label{eq:beam_log_stiffness_parameter}
\end{equation}
so that positivity of \(EI\) is enforced during inversion. The corresponding finite-dimensional Jacobians are taken with respect to \(p\). Thus the diagonal Fisher information shown below is an information density in log-stiffness coordinates, \(\mathcal I_{\mathrm{diag}}^{p}(x)\), rather than directly in \(EI(x)\).

The same damaged beam is interrogated by two observation systems: a static moving-load
test with two rotation sensors and a dynamic frequency-response test with several
harmonic excitation locations and one response sensor. The experiment configuration, data,
reconstructions, and information densities are summarized in
\cref{fig:hybrid_static_dynamic_composite}. This benchmark illustrates that heterogeneous
data types can be compared by the parameter-space directions they inform.
Static--dynamic data combination is established in structural model updating; here it is
used to show how the full information operator separates reinforcement of already visible
directions from genuine weak-direction gain~\cite{bellMultiresponseParameterEstimation2007,
kimSequentialFrameworkImproving2019}.

\subsubsection{Static tilt observations}

The static benchmark uses a moving vertical point load and two fixed tilt sensors. For
each load position \(z_k\), rotations are measured at sensor locations \(r_j\). The static
observation vector is obtained by stacking the measured rotations over all load positions
and sensor locations,
\begin{equation}
y_{\mathrm{stat}}
=
\{\theta(r_j;z_k)\}_{j,k}.
\label{eq:static_tilt_observation_vector}
\end{equation}
After linearization around a reference stiffness field, these data contribute a block
information operator of the form \cref{eq:sensor_information}, denoted here by
\(\mathcal I_{\mathrm{stat}}\). This is the finite-dimensional counterpart of the
moving-load rotation kernel derived above.

The static test is shown in \cref{fig:hybrid_static_dynamic_composite}a,b,d,g. The
identified static model follows the generated tilt responses, and the static-only
reconstruction recovers the broad location of the damaged region. The estimated stiffness
profile is smoother and less localized than the true loss, consistent with the integral
character of rotation measurements: each load case produces a broad sensitivity pattern.
The diagonal Fisher information is nonuniform and largest where the moving-load response
and rotation sensors have strong combined sensitivity, but it does not fully localize the
damaged interval; this explains the wider local posterior bands.

\subsubsection{Dynamic frequency-response observations}

Dynamic testing provides a different observation mechanism. Instead of measuring static
rotations under a moving load, harmonic or broadband excitation is applied and
frequency-response functions (FRFs) are measured. Such FRF-based and resonance-based
approaches are widely used in structural health monitoring and stiffness or
flexural-rigidity identification~\cite{bagheriIdentificationFlexuralRigidity2018,
kooSubstructuralIdentificationFlexural2015,leeFrequencydomainMethodStructural2002,
panResonancebasedApproachSection2021,bonkowskiStiffnessIdentificationReinforced2024}.
In the present benchmark, the dynamic observation vector stacks log-magnitude
frequency-response data over excitation positions and sampled frequencies,
\begin{equation}
y_{\mathrm{dyn}}
=
\{\log |\mathcal H(r,z_k;\omega_q)|\}_{k,q}.
\label{eq:dynamic_frf_observation_vector}
\end{equation}
The corresponding block information operator has the same form as
\cref{eq:sensor_information} and is denoted by \(\mathcal I_{\mathrm{dyn}}\). For
interpretation, consider harmonic point forcing
\[
f(x,t)=\hat F\,e^{i\omega t}\,\delta(x-z)
\]
applied at position \(z\) on a uniform simply supported beam. The steady-state
displacement \(w(x;\omega)\) satisfies
\begin{equation}
EI_0\,\frac{\partial^4 w}{\partial x^4}
-
\varrho A\,\omega^2 w
+
i\omega c_{\mathrm d}\,w
=
\hat F\,\delta(x-z),
\label{eq:forced_beam}
\end{equation}
where \(\varrho A\) is mass per unit length and \(c_{\mathrm d}\) is viscous damping.

For a generic evaluation coordinate \(a\), define \(\chi=a/L\).
Using normalized positions \(s=x/L\), \(\rho=r/L\), and \(\zeta=z/L\), the modal expansion of the Green's function is

\begin{equation}
G_0(a,x;\omega)
=
\frac{2}{L}
\sum_{n=1}^{\infty}
\frac{
\sin(n\pi\chi)
\sin(n\pi s)
}
{D_n(\omega)},
\qquad
D_n(\omega)
=
EI_0\Bigl(\frac{n\pi}{L}\Bigr)^{\!4}
-
\varrho A\,\omega^2
+
i\omega c_{\mathrm d},
\qquad n\in\mathbb N,
\label{eq:beam_greens_function}
\end{equation}
The curvature product that enters the local complex-FRF sensitivity at position \(x\),
for a sensor at \(r\) and excitation at \(z\), is
\begin{align}
&\frac{\partial^2 G_0}{\partial x^2}(r,x;\omega)\,
\frac{\partial^2 G_0}{\partial x^2}(x,z;\omega)
\notag\\
&\quad =
\frac{4}{L^2}
\sum_{j=1}^{\infty}
\sum_{n=1}^{\infty}
\frac{
\bigl(\frac{j\pi}{L}\bigr)^2
\sin(j\pi\rho)
\sin(j\pi s)
}
{D_j(\omega)}
\notag\\
&\qquad \times
\frac{
\bigl(\frac{n\pi}{L}\bigr)^2
\sin(n\pi\zeta)
\sin(n\pi s)
}
{D_n(\omega)} .
\label{eq:dynamic_curvature_sensitivity}
\end{align}
The benchmark, however, uses log-magnitude FRF data. Therefore the relevant Jacobian row
is not the complex sensitivity \(\partial \mathcal H/\partial p(x)\) itself, but the sensitivity of
\(\log |\mathcal H|\). For the log-stiffness coordinate
\(p(x)=\log(EI(x)/EI_0)\),
\begin{equation}
\frac{\partial \log |\mathcal H_{kq}|}{\partial p(x)}
=
\Re\!\left[
\frac{1}{\mathcal H_{kq}}
\frac{\partial \mathcal H_{kq}}{\partial p(x)}
\right],
\qquad
\mathcal H_{kq}=\mathcal H(r,z_k;\omega_q).
\label{eq:log_frf_sensitivity}
\end{equation}
For a set of excitation positions \(\{z_k\}_{k=1}^{K}\) and frequencies
\(\{\omega_q\}_{q=1}^{Q}\), the corresponding pointwise Fisher-information density in
log-stiffness coordinates is
\begin{equation}
\mathcal I_{\mathrm{dyn},\mathrm{diag}}^{p}(x;r)
=
\sum_{q=1}^{Q}
\sum_{k=1}^{K}
\frac{1}{\sigma_{\log,kq}^{2}}
\left[
\Re\!\left\{
\frac{1}{\mathcal H(r,z_k;\omega_q)}
\frac{\partial \mathcal H(r,z_k;\omega_q)}{\partial p(x)}
\right\}
\right]^2 .
\label{eq:dynamic_information_density}
\end{equation}
For a uniform reference beam, \(\partial \mathcal H/\partial p(x)\) is proportional, up to sign and
reference-stiffness factors, to
\[
\hat F(\omega_q)
\frac{\partial^2 G_0}{\partial x^2}(r,x;\omega_q)
\frac{\partial^2 G_0}{\partial x^2}(z_k,x;\omega_q),
\]
so \cref{eq:dynamic_curvature_sensitivity} supplies the modal series entering the
log-magnitude sensitivity. Unlike the static moving-load case, the general multi-frequency
expression does not reduce to a simple polynomial form. In a single-mode-dominated regime,
with light damping and excitation near \(\omega_{\ell}\), and away from antiresonance zeros
of \(\mathcal H\), the leading spatial factor in the density is expected to scale like
\begin{equation}
\mathcal I_{\mathrm{dyn},\mathrm{diag}}^{p}(x;r)
\approx
\kappa_{\mathrm{dyn}}\sin^4(\ell\pi s),
\label{eq:single_mode_dynamic}
\end{equation}
which vanishes at the supports and peaks near the curvature maxima of mode \(\ell\).

The numerical FRF benchmark is shown in
\cref{fig:hybrid_static_dynamic_composite}a,c,e,g. The frequency-response curves show
resonance and antiresonance structure that depends on excitation location, producing rows
of \(J_{\mathrm{dyn}}\) that differ qualitatively from the static rotation rows.
Consequently, the dynamic Fisher information is more concentrated around the damaged
interval than the static contribution. The dynamic-only reconstruction gives sharper
localization than the static-only reconstruction, although uncertainty remains near
boundaries and in regions weakly excited by the selected modes and sensor layout.

\subsubsection{Hybrid static--dynamic fusion}

The static and dynamic tests provide complementary information about \(EI(x)\). The static
moving-load test produces broad integral sensitivity patterns associated with rotation
influence lines, whereas the dynamic FRF test produces frequency-dependent patterns
associated with modal curvature, resonance amplification, and antiresonance structure.
Within the information-operator framework, the two data sources are fused by stacking
their residuals and Jacobians.

Under conditional independence of the static and dynamic measurement errors, the additivity rule in \cref{eq:additivity} gives
\begin{equation}
\mathcal I_{\mathrm{hyb}}
=
\mathcal I_{\mathrm{stat}}
+
\mathcal I_{\mathrm{dyn}} .
\label{eq:hybrid_information_additivity}
\end{equation}
In the numerical implementation, each residual block is normalized by its prescribed
noise standard deviation before forming the Jacobian. Thus the covariance weighting is
already included in the whitened Jacobians, and the static, dynamic, and hybrid
information matrices are the corresponding whitened Gram matrices, combined according to
\cref{eq:hybrid_information_additivity}. The same Gaussian prior/regularization
precision is used for the static-only, dynamic-only, and hybrid cases,
\begin{equation}
Q_{\mathrm{pr}}
=
\gamma_{\mathrm{pr}}D_{\mathrm{diff}}^{\!\top}D_{\mathrm{diff}}+\epsilon I ,
\label{eq:hybrid_prior_precision}
\end{equation}
where \(D_{\mathrm{diff}}\) is a first-difference operator on the log-stiffness parameters
and \(\gamma_{\mathrm{pr}}>0\) is its regularization weight. Using the local
posterior-precision structure in \cref{eq:local_post_precision}, with \(J\) denoting the
noise-whitened Jacobian for the corresponding data configuration, the local posterior
covariance is approximated by
\begin{equation}
C_{\mathrm{post}}
\approx
\left(J^{\!\top}J+Q_{\mathrm{pr}}\right)^{-1}.
\label{eq:hybrid_local_covariance}
\end{equation}

The uncertainty bands in \cref{fig:hybrid_static_dynamic_composite}d--f are local
linearized posterior bands obtained from this covariance approximation using the same
prior precision and known noise model for the three data configurations.

The fused result is shown in \cref{fig:hybrid_static_dynamic_composite}f,g. The hybrid
diagonal Fisher-information density is the pointwise sum of the static and dynamic
diagonal contributions and is largest near the damaged interval. Correspondingly, the
hybrid reconstruction has a deeper and more localized stiffness dip than the static-only
result and narrower local posterior bands. The density also shows where one block supplies
local visibility in regions where the other is weaker. This is a coordinate-level analogue
of weak-direction gain; a full weak-direction analysis would use the complete operator and
the projector \(\Pi_{\mathrm{weak}}\).

The practical trade-off remains important. Dynamic testing requires controlled
excitation, frequency selection, damping assumptions, and careful amplitude limits, while
static moving-load tests may be cheaper and less sensitive to damping and mass modelling.
The information-operator framework does not remove this engineering trade-off; it makes
the incremental information gain comparable to the marginal cost and complexity of the
additional test.

\begin{figure}[!htbp]
\centering
\includegraphics[width=\textwidth]{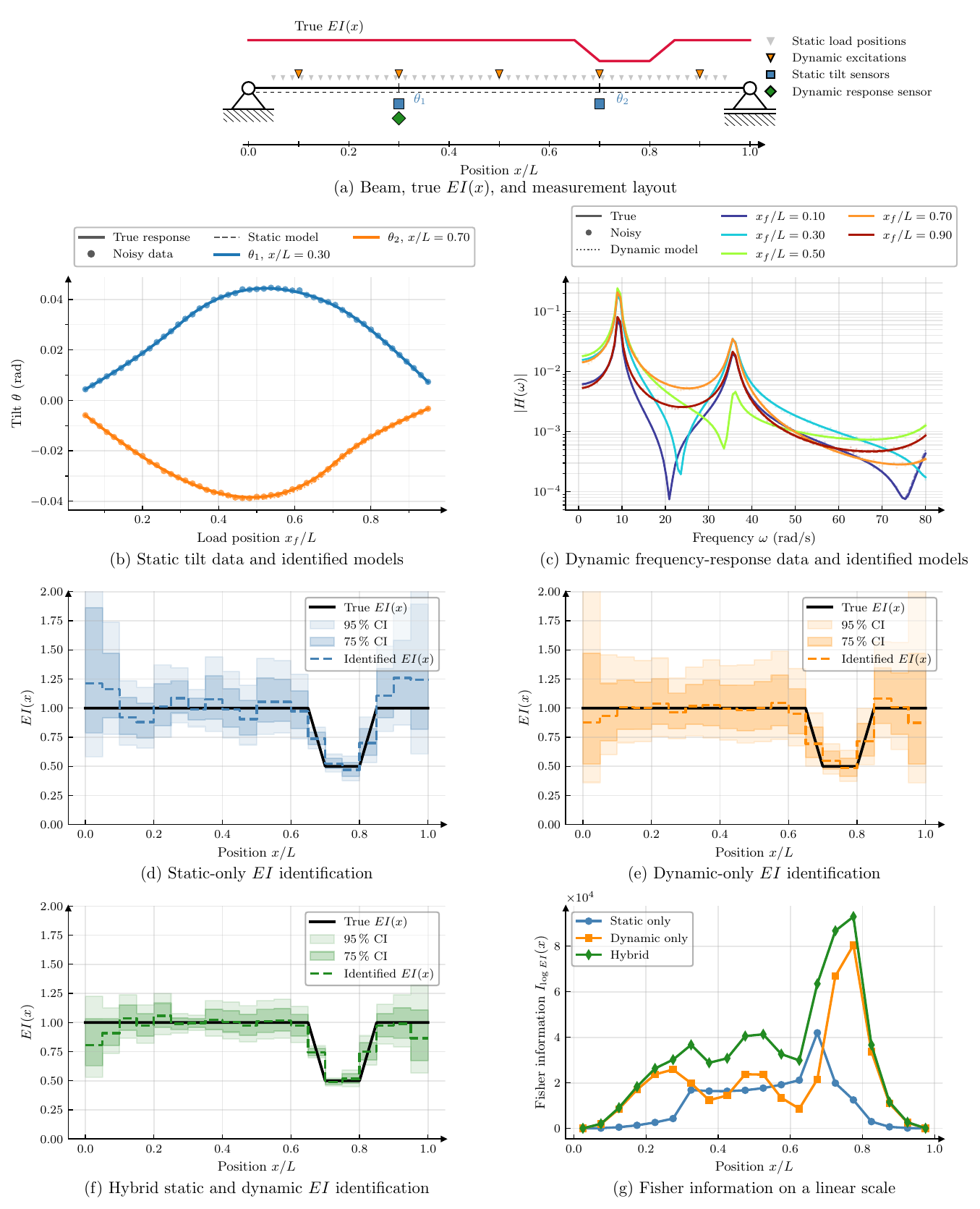}
\caption{Hybrid static--dynamic identification of a damaged flexural-rigidity field \(EI(x)\).
(a) Beam, true stiffness profile, static moving-load positions, dynamic excitation locations, and sensor layout.
(b) Static tilt responses under the moving load and model responses from the identified fields.
(c) Dynamic frequency-response functions for several excitation locations and corresponding identified-model responses.
(d)--(f) Static-only, dynamic-only, and hybrid reconstructions of \(EI(x)\).
Bands denote local linearized posterior uncertainty bands evaluated using the same prior precision and known noise model; dashed curves denote the corresponding MAP reconstructions.
(g) Diagonal Fisher-information densities in log-stiffness coordinates, shown on a linear scale. The hybrid density is the sum of the static and dynamic densities, illustrating information additivity under independent observation errors.}
\label{fig:hybrid_static_dynamic_composite}
\end{figure}

\subsection{Application to two-dimensional damage-field reconstruction}
\label{sec:2d_damage_reconstruction}

The final example applies the framework to a two-dimensional continuum
damage-identification problem. The structure is modelled as a beam-like linear-elastic
domain, but the unknown is a spatial damage field \(d(x,y)\) rather than a
one-dimensional beam stiffness. The leading information modes define the reconstruction
space, restricting the inverse update to damage patterns locally visible under the chosen
loads and sensors. This gives a finite-dimensional demonstration of the heterogeneous
observation map and operator viewpoint developed in \cref{sec:framework,sec:matrix_free}.

The unknown parameter is \(m=d(x,y)\), with the local elastic modulus represented as
\begin{equation}
E(x,y)=E_0\{\kappa+(1-\kappa)[1-d(x,y)]\}.
\label{eq:damage_to_modulus_2d}
\end{equation}
Thus \(d=0\) corresponds to the undamaged modulus and increasing \(d\) represents a local
stiffness loss, while the floor \(\kappa\) prevents a singular stiffness field.

For a prescribed experiment set \(\Xi\) of moving top-traction load cases, the
parameter-to-observation map has the same chain-rule structure as in
\cref{eq:chain_rule_factorization}. The physics solve maps \(m=d(x,y)\) to the
displacement and strain fields, and the observation operator then samples axial strain,
vertical displacement, and rotation-like quantities \(\partial u_y/\partial x\). Stacking
all load cases and sensor types gives the same stacked map \(Y(m;\Xi)\) used in
\cref{eq:stacked_map}.

Around a nominal damage field \(m_0\), the local Jacobian is
\(J(m_0;\Xi)=DY(m_0;\Xi)\), and the local information operator is the one defined in
\cref{eq:local_information_operator}. In the computation below, \(m_0\) is the nominal
mean damage field used for the benchmark and the center of the quadratic prior penalty,
and \(J\) is evaluated by finite differences of the physics model. The covariance \(R\)
is diagonal and encodes the relative reliability
of the strain, displacement, rotation-like, and model-error contributions.

If the observation blocks are conditionally independent, their contributions add exactly
as in \cref{eq:additivity}; the reconstruction in \cref{fig:2d_damage_reconstruction} is
therefore based on one heterogeneous local information operator. In contrast to the
analytic beam example, no closed-form kernel is used. The eigenvectors \(\psi_i\) rank
two-dimensional damage patterns by the noise-weighted output energy generated under the
selected load and sensor layout. Here they are likelihood-only Euclidean modes:
fixed-discretization diagnostics for the stated grid, parameter scaling, and observation
model.
A prior-relative Bayesian variant would instead use the generalized modes introduced in
\cref{eq:generalized_eigenproblem}.

The reconstruction step is the local Gaussian MAP update obtained from
\cref{eq:negative_log_posterior,eq:linearization}, restricted to the informed subspace.

Let \(\mathcal V_k=\operatorname{span}\{\psi_1,\ldots,\psi_k\}\). Only perturbations in this
subspace are allowed to change:

\begin{equation}
\begin{aligned}
m_{\mathrm{MAP},k}
&=
\operatorname*{arg\,min}_{m-m_0\in\mathcal V_k}
\Bigg[
\frac{1}{2}\left\|y-Y(m_0;\Xi)-J(m_0;\Xi)(m-m_0)\right\|_{R^{-1}}^2\\
&\qquad\qquad\qquad\qquad
+
\frac{1}{2}\|m-m_0\|_{Q_{\mathrm{pr}}}^2
\Bigg].
\end{aligned}
\label{eq:damage_subspace_reconstruction}
\end{equation}

Because the forward map is nonlinear, the informed subspace depends in principle on the
evaluation point. The fixed-subspace reconstruction in
\cref{eq:damage_subspace_reconstruction} is therefore a first-order dimension reduction
around \(m_0\); in strongly nonlinear regimes, the subspace would need to be updated
iteratively.
In the numerical example, the prior term is represented by a small isotropic penalty in
the retained modal coefficients, and the reconstructed damage field is projected to the
admissible interval \(0\le d\le0.9\). The information operator first ranks locally visible
spatial patterns and then defines the low-dimensional inverse-update space. This
moderate-size example explicitly forms the local information operator in
\cref{eq:local_information_operator}; large-scale implementations can instead use the
matrix-free actions described in \cref{sec:matrix_free}.

\Cref{fig:2d_damage_reconstruction} shows a representative reconstruction on a
\(17\times81\) grid, corresponding to \(1377\) damage unknowns. The observation vector
stacks \(14\) axial-strain sensors, \(3\) vertical-displacement sensors, and \(3\)
rotation-like sensors over eight moving-load cases, giving \(160\) scalar observations.
The reconstruction uses the first \(k=8\) Euclidean eigenmodes of the discrete local
information operator. This compact illustrative subspace retains the dominant visible
patterns while avoiding many weak, noise-sensitive modes. For this realization,
\(\lambda_1/\lambda_8 \approx 24.4\), indicating noticeably different information levels
within the retained subspace.

The top panel gives the true damage field, the moving load, and the sensor layout; the
bottom panel gives the MAP field reconstructed in the leading information subspace. The
damage is placed in the lower tension zone, where bending-induced axial strain gives
strong sensitivity to stiffness loss. The reconstruction recovers the dominant
longitudinal location and broad extent of the damaged region, while remaining smoother
than the true field and showing small localized artefacts near highly instrumented
regions.

This behavior is consistent with the spectrum of \(\mathcal I(m_0;\Xi)\): components in
the leading parameter-observability modes are preferentially reconstructed, whereas
components outside that subspace are suppressed or confounded by noise, finite sensor
coverage, and regularization. The reconstruction quality should therefore be interpreted
relative to the stated load cases, sensor types, noise model, and \(k=8\) information
subspace.

\begin{figure}[H]
\centering
\includegraphics[width=1.0\textwidth]{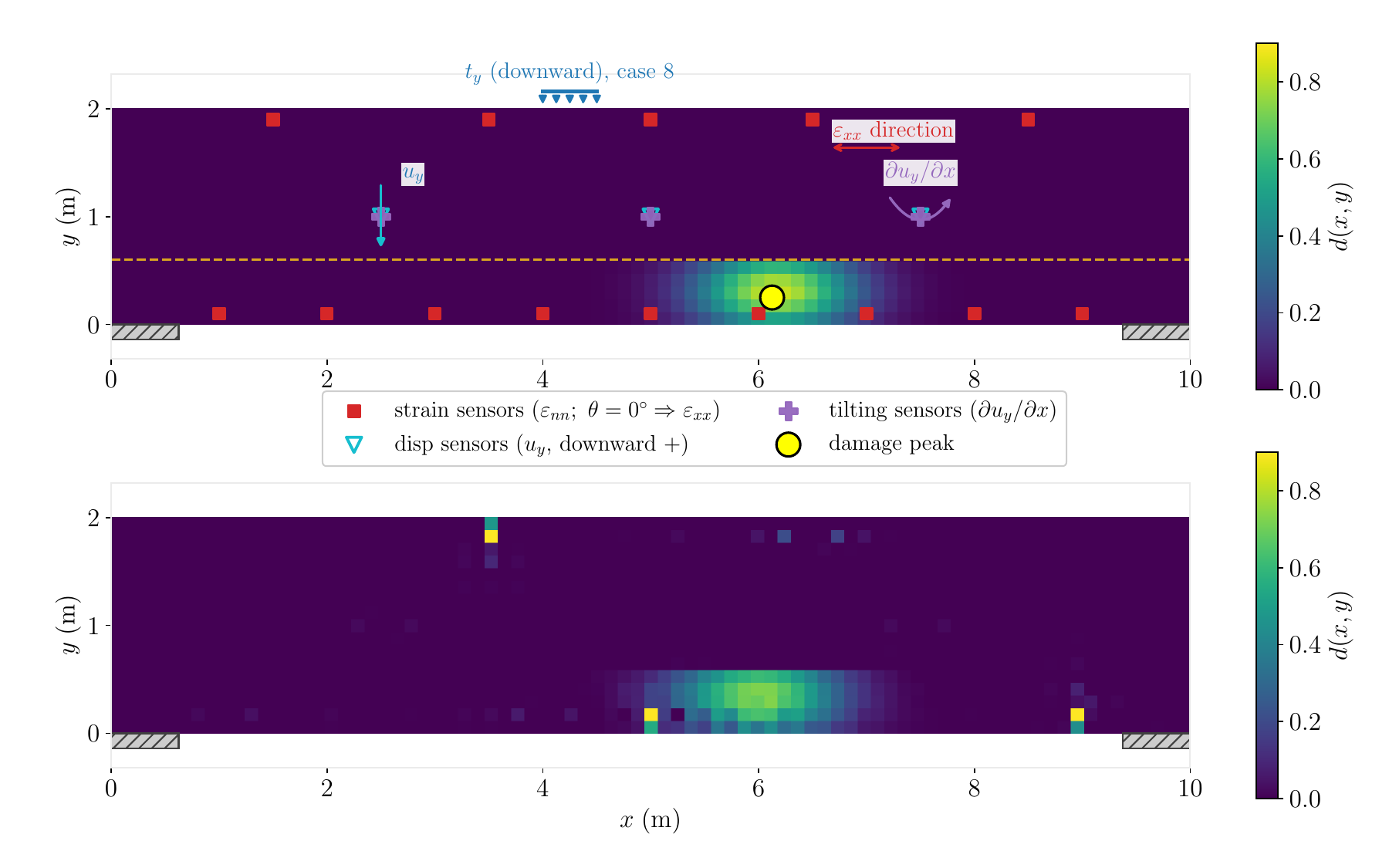}
\caption{Application of the parameter-observability framework to two-dimensional damage-field reconstruction. 
Top: true damage field \(d(x,y)\), moving downward load case, and sensor layout. The dashed horizontal line marks the lower tension-zone band; square markers denote axial strain sensors, inverted-triangle markers denote vertical displacement sensors, cross markers denote rotation/tilt sensors, and the circled marker denotes the damage peak. 
Bottom: MAP reconstruction restricted to the first \(k=8\) fixed-discretization parameter-observability modes of the local information operator.
The reconstruction localizes the dominant damage zone but remains smoother than the true field, illustrating how the local information operator selects the damage components most strongly supported by the chosen heterogeneous experiment.}
\label{fig:2d_damage_reconstruction}
\end{figure}

\section{Discussion and scope}
\label{sec:discussion}

The examples illustrate the main viewpoint of the paper: within the local linearized
setting, spatial identifiability is governed by the full local information operator,
not by its diagonal alone.
The bilinear form
defined by \(\mathcal I\) diagnoses visible, weakly visible, and locally invisible
parameter-field perturbations, while the kernel and spectrum expose coupling between
those perturbations. The diagonal remains useful as a pointwise visibility density, but
it cannot by itself rank coupled spatial patterns or distinguish redundancy from
weak-direction gain.

This operator provides a common language for concepts that are often treated separately:
Fisher information in experimental design, Gauss--Newton curvature in inverse problems,
and noise-weighted parameter-output sensitivity Gramians in sensitivity analysis.
In this framework, scalar criteria, diagonal densities, prior-relative modes, and
weak-direction assessments are not independent diagnostics, but different reductions or
projections of the same parameter-space information geometry
\cite{sanayeiSensorPlacementParameter2002,papadimitriouOptimalSensorPlacement2004,
vandewouwerApproachSelectionOptimal2000,ebrahimianInformationTheoreticApproachIdentifiability2019,
alexanderianOptimalExperimentalDesign2021,jauffretObservabilityFisherInformation2007a,
qianModelReductionLinear2022,cuiLikelihoodinformedDimensionReduction2014,
spantiniOptimalLowrankApproximations2015,scheffelsLikelihoodinformedModelReduction2025}.

This viewpoint also clarifies related design and adaptation methods. Diagonal
information densities have been used for mesh adaptation and
Fisher-information-based design diagnostics~\cite{bangerthEstimatingUsingInformation2026,
antilFisherInformationBasedSensorPlacement2026}; here, the diagonal is a visibility
density, while spatial identifiability is assessed through the full operator. This
distinction is consistent with classical inverse-problem resolution and covariance
analysis, where diagonal uncertainty summaries can miss correlations and spatial
spreading encoded by the full operator~\cite{menkeResolutionCovarianceGLS2014}.
Bayesian optimal experimental design supplies scalar posterior-uncertainty criteria for
ranking configurations~\cite{alexanderianOptimalExperimentalDesign2021,huanOptimalExperimentalDesign2024};
the information-operator view identifies the parameter-space directions responsible for
such gains.

The analysis is local: evaluating the Jacobian at a nominal field \(m_0\) gives a
first-order description of distinguishability around a baseline model, design state,
current digital-twin state, or maximum a posteriori (MAP) estimate. The choice of
\(m_0\) should follow the purpose of the analysis: for pre-experiment sensor placement it
is naturally the expected healthy or design state, whereas for post-experiment
interpretation it should usually be the MAP estimate or the current updated model state.
In damage-identification problems, a useful two-stage strategy is to compute modes at
the healthy baseline for experiment design and then recompute them at the MAP estimate.
Robustness can be assessed by comparing dominant subspaces at several plausible nominal
fields, for example through principal angles.

The useful range of the local operator is limited by the validity of the first-order
linearization around \(m_0\). For strongly nonlinear damage regimes, contact changes,
cracking, large stiffness losses, or changing boundary conditions, the identifiable
subspace can change substantially and should be updated along an optimization,
filtering, or sequential digital-twin path. More generally, the distinction between
structural and practical identifiability remains
essential~\cite{wielandStructuralPracticalIdentifiability2021}: the proposed operator
diagnoses local first-order distinguishability, not global uniqueness. Nonlinear
parameter interactions, multimodal posteriors, and practical non-identifiability away
from \(m_0\) require complementary tools such as nonlinear inversion, profile likelihood,
posterior sampling, continuation studies, or multi-start optimization.

The specific form in \cref{eq:local_information_operator} assumes additive Gaussian
observational uncertainty with parameter-independent covariance. For heavy-tailed,
outlier-contaminated, drifting, or environmentally dependent data, the corresponding
local operator should be taken as the Fisher information or Hessian of the chosen
likelihood, possibly with robust or residual-dependent
weights~\cite{huberRobustEstimationLocation1964a,huberRobustStatistics2009}. Correlated
observation errors can change information rankings and must be represented in
the joint covariance rather than treated as independent blocks~\cite{attiaOptimalExperimentalDesign2022,
qiOptimalExperimentalDesign2026}.
Similarly, covariance inflation for model discrepancy accounts for additional uncertainty
but does not remove systematic bias. If loads, boundary conditions, environmental effects, or
calibration parameters are confounded with the parameter field of interest, they should
be represented explicitly as nuisance parameters and eliminated or marginalized in the
joint local information geometry.

Finally, mode interpretation is metric dependent. Raw spectra change under
reparameterization and may depend on mesh resolution, basis scaling, and the chosen
inner product. Metric-weighted and prior-preconditioned modes are therefore part of the
framework. In realistic distributed inverse problems, the operator is often low-rank or
severely ill-conditioned, so prior regularization, Bayesian formulation, and subspace
truncation are part of the well-posed inverse problem rather than numerical
afterthoughts. Matrix-free actions make the framework scalable in principle, but
dominant-mode extraction still requires repeated tangent, covariance-weighting, prior,
and adjoint actions.

\section{Conclusions}
\label{sec:conclusions}

This paper has presented a local information-operator framework for assessing spatial
identifiability in distributed-parameter Bayesian inverse problems in computational
mechanics. Under Gaussian observational uncertainty and local linearization, the
likelihood contribution to posterior precision has the same quadratic form as Fisher
information, Gauss--Newton data-misfit curvature, and a noise-weighted parameter-output
sensitivity Gramian. The paper uses this equivalence to interpret local identifiability
as a property of parameter-field perturbations under a specified observation programme.

The resulting operator separates coordinate-level visibility from pattern-level
identifiability. The diagonal gives a pointwise information density, whereas the full
kernel and metric- or prior-preconditioned spectra rank the locally distinguishable
parameter-field patterns and reveal their coupling. Blockwise projections then
distinguish redundant information gain from genuine weak-direction gain when
heterogeneous observation blocks are added.

The beam and damage-identification examples illustrate this interpretation across
closed-form kernel analysis, finite-element information construction, static--dynamic
sensor fusion, and subspace-restricted reconstruction. The matrix-free formulation
further shows how the same diagnostics can be evaluated through tangent,
covariance-weighting, and adjoint actions, supporting their use as a diagnostic layer for
large-scale mechanics models. Across these cases, spatial identifiability is governed
less by the magnitude of pointwise information density than by the parameter-field
patterns distinguishable by the experiment. The main contribution is to make this
distinction explicit for distributed-parameter mechanics problems and to show how it
affects interpretation, sensor fusion, and reduced reconstruction.

Future work may extend the framework to large three-dimensional finite-element
digital twins, correlated and robust noise models, and adaptive experiment design with
sequentially updated information operators.

\printbibliography

@book{beckParameterEstimationEngineering1977,
  title = {Parameter Estimation in Engineering and Science},
  author = {Beck, James Vere and Arnold, Kenneth J.},
  year = {1977},
  publisher = {Wiley},
  address = {New York}
}

@book{tarantolaInverseProblemTheory2005,
  title = {Inverse Problem Theory and Methods for Model Parameter Estimation},
  author = {Tarantola, Albert},
  year = {2005},
  publisher = {Society for Industrial and Applied Mathematics},
  address = {Philadelphia, PA}
}

@article{papadimitriouEntropyBasedOptimalSensor2000,
  title = {Entropy-Based Optimal Sensor Location for Structural Model Updating},
  author = {Papadimitriou, Costas and Beck, James L. and Au, Siu-Kui},
  year = {2000},
  journal = {Journal of Vibration and Control},
  volume = {6},
  number = {5},
  pages = {781--800},
  doi = {10.1177/107754630000600508}
}

@article{papadimitriouOptimalSensorPlacement2004,
  title = {Optimal Sensor Placement Methodology for Parametric Identification of Structural Systems},
  author = {Papadimitriou, Costas},
  year = {2004},
  journal = {Journal of Sound and Vibration},
  volume = {278},
  number = {4},
  pages = {923--947},
  doi = {10.1016/j.jsv.2003.10.063}
}

@misc{lohnerHighFidelityDigitalTwins2023,
  title = {High-Fidelity Digital Twins: Detecting and Localizing Weaknesses in Structures},
  author = {L{\"o}hner, Rainald and Airaudo, Facundo N. and Antil, Harbir and W{\"u}chner, Roland and Meister, Fabian and Warnakulasuriya, Suneth},
  year = {2023},
  publisher = {arXiv},
  doi = {10.48550/arXiv.2311.10925}
}

@article{youDistributedBendingStiffness2023,
  title = {Distributed Bending Stiffness Estimation of Bridges Using Adaptive Inverse Unit Load Method},
  author = {You, Run-Zhou and Yi, Ting-Hua and Ren, Liang and Li, Hong-Nan},
  year = {2023},
  journal = {Engineering Structures},
  volume = {297},
  pages = {116981},
  doi = {10.1016/j.engstruct.2023.116981}
}

@article{diazMergingExperimentalDesign2023,
  title = {Merging Experimental Design and Structural Identification around the Concept of Modified Constitutive Relation Error in Low-Frequency Dynamics for Enhanced Structural Monitoring},
  author = {Diaz, M. and Charbonnel, P.-{\'E}. and Chamoin, L.},
  year = {2023},
  journal = {Mechanical Systems and Signal Processing},
  volume = {197},
  pages = {110371},
  doi = {10.1016/j.ymssp.2023.110371}
}

@misc{antilFisherInformationBasedSensorPlacement2026,
  title = {Fisher-Information-Based Sensor Placement for Structural Digital Twins: Analytic Results and Benchmarks},
  author = {Antil, Harbir and Jain, Animesh and L{\"o}hner, Rainald},
  year = {2026},
  publisher = {arXiv},
  eprint = {2602.02981},
  archiveprefix = {arXiv},
  doi = {10.48550/arXiv.2602.02981}
}

@misc{bakeerSensorInformativenessIdentifiability2025a,
  title = {Sensor Informativeness, Identifiability, and Uncertainty in Bayesian Inverse Problems for Structural Health Monitoring},
  author = {Bakeer, Tammam and Herbers, Max and Marx, Steffen},
  year = {2025},
  publisher = {arXiv},
  eprint = {2511.16628},
  archiveprefix = {arXiv},
  doi = {10.48550/arXiv.2511.16628}
}

@article{moorePrincipalComponentAnalysis1981,
  title = {Principal Component Analysis in Linear Systems: Controllability, Observability, and Model Reduction},
  author = {Moore, Bruce},
  year = {1981},
  journal = {IEEE Transactions on Automatic Control},
  volume = {26},
  number = {1},
  pages = {17--32},
  doi = {10.1109/TAC.1981.1102568}
}

@article{himpeEmgrEmpiricalGramian2018,
  title = {emgr - The Empirical Gramian Framework},
  author = {Himpe, Christian},
  year = {2018},
  journal = {Algorithms},
  volume = {11},
  number = {7},
  pages = {91},
  doi = {10.3390/a11070091}
}

@article{lewisRoleObservabilityGramian2022,
  title = {Role of the Observability Gramian in Parameter Estimation: Application to Nonchaotic and Chaotic Systems via the Forward Sensitivity Method},
  author = {Lewis, John M. and Lakshmivarahan, Sivaramakrishnan},
  year = {2022},
  journal = {Atmosphere},
  volume = {13},
  number = {10},
  doi = {10.3390/atmos13101647}
}

@article{stuartInverseProblemsBayesian2010,
  title = {Inverse Problems: A Bayesian Perspective},
  author = {Stuart, A. M.},
  year = {2010},
  journal = {Acta Numerica},
  volume = {19},
  pages = {451--559},
  doi = {10.1017/S0962492910000061}
}

@article{flathFastAlgorithmsBayesian2011,
  title = {Fast Algorithms for Bayesian Uncertainty Quantification in Large-Scale Linear Inverse Problems Based on Low-Rank Partial Hessian Approximations},
  author = {Flath, H. P. and Wilcox, L. C. and Ak{\c{c}}elik, V. and Hill, J. and van Bloemen Waanders, B. and Ghattas, O.},
  year = {2011},
  journal = {SIAM Journal on Scientific Computing},
  volume = {33},
  number = {1},
  pages = {407--432},
  doi = {10.1137/090780717}
}

@article{bui-thanhComputationalFrameworkInfiniteDimensional2013,
  title = {A Computational Framework for Infinite-Dimensional Bayesian Inverse Problems Part I: The Linearized Case, with Application to Global Seismic Inversion},
  author = {Bui-Thanh, Tan and Ghattas, Omar and Martin, James and Stadler, Georg},
  year = {2013},
  journal = {SIAM Journal on Scientific Computing},
  volume = {35},
  number = {6},
  pages = {A2494--A2523},
  doi = {10.1137/12089586X}
}

@article{cuiLikelihoodinformedDimensionReduction2014,
  title = {Likelihood-Informed Dimension Reduction for Nonlinear Inverse Problems},
  author = {Cui, Tiangang and Martin, James and Marzouk, Youssef M. and Solonen, Antti and Spantini, Alessio},
  year = {2014},
  journal = {Inverse Problems},
  volume = {30},
  number = {11},
  pages = {114015},
  doi = {10.1088/0266-5611/30/11/114015}
}

@article{spantiniOptimalLowrankApproximations2015,
  title = {Optimal Low-rank Approximations of Bayesian Linear Inverse Problems},
  author = {Spantini, Alessio and Solonen, Antti and Cui, Tiangang and Martin, James and Tenorio, Luis and Marzouk, Youssef},
  year = {2015},
  journal = {SIAM Journal on Scientific Computing},
  volume = {37},
  number = {6},
  pages = {A2451--A2487},
  doi = {10.1137/140977308}
}

@article{alexanderianAOptimalDesignExperiments2014,
  title = {A-Optimal Design of Experiments for Infinite-Dimensional Bayesian Linear Inverse Problems with Regularized {$\ell_0$}-Sparsification},
  author = {Alexanderian, Alen and Petra, Noemi and Stadler, Georg and Ghattas, Omar},
  year = {2014},
  journal = {SIAM Journal on Scientific Computing},
  volume = {36},
  number = {5},
  pages = {A2122--A2148},
  doi = {10.1137/130933381}
}

@article{alexanderianOptimalExperimentalDesign2021,
  title = {Optimal Experimental Design for Infinite-Dimensional Bayesian Inverse Problems Governed by PDEs: A Review},
  author = {Alexanderian, Alen},
  year = {2021},
  journal = {Inverse Problems},
  volume = {37},
  number = {4},
  pages = {043001},
  doi = {10.1088/1361-6420/abe10c}
}

@inproceedings{braceSensorPlacementCantilever2022,
  title = {Sensor Placement on a Cantilever Beam Using Observability Gramians},
  author = {Brace, Natalie L. and Andrews, Nicholas B. and Upsal, Jeremy and Morgansen, Kristi A.},
  year = {2022},
  booktitle = {2022 IEEE 61st Conference on Decision and Control (CDC)},
  pages = {388--395},
  publisher = {IEEE},
  doi = {10.1109/CDC51059.2022.9992639}
}

@article{wielandStructuralPracticalIdentifiability2021,
  title = {On Structural and Practical Identifiability},
  author = {Wieland, Franz-Georg and Hauber, Adrian L. and Rosenblatt, Marcus and T{\"o}nsing, Christian and Timmer, Jens},
  year = {2021},
  journal = {Current Opinion in Systems Biology},
  volume = {25},
  pages = {60--69},
  doi = {10.1016/j.coisb.2021.03.005}
}

@article{bangerthEstimatingUsingInformation2026,
  title = {Estimating and Using Information in Inverse Problems},
  author = {Bangerth, Wolfgang and Johnson, Chris R. and Njeru, Dennis K. and {van Bloemen Waanders}, Bart},
  year = 2026,
  journal = {Inverse Problems and Imaging},
  volume = {24},
  pages = {1--33},
  issn = {1930-8337, 1930-8345},
  doi = {10.3934/ipi.2026003},
  urldate = {2026-05-18},
}

@article{kammerSensorPlacementOnorbit1991,
  title = {Sensor Placement for On-Orbit Modal Identification and Correlation of Large Space Structures},
  author = {Kammer, Daniel C.},
  year = 1991,
  month = mar,
  journal = {Journal of Guidance, Control, and Dynamics},
  volume = {14},
  number = {2},
  pages = {251--259},
  issn = {0731-5090, 1533-3884},
  doi = {10.2514/3.20635},
  urldate = {2026-02-08},
  langid = {english}
}

@article{bertolaOptimalMultiTypeSensor2017,
  title = {Optimal {{Multi-Type Sensor Placement}} for {{Structural Identification}} by {{Static-Load Testing}}},
  author = {Bertola, Numa Joy and Papadopoulou, Maria and Vernay, Didier and Smith, Ian F. C.},
  year = 2017,
  month = dec,
  journal = {Sensors (Basel, Switzerland)},
  volume = {17},
  number = {12},
  pages = {2904},
  issn = {1424-8220},
  doi = {10.3390/s17122904},
  urldate = {2025-08-01},
  langid = {english},
  pmcid = {PMC5751592},
  pmid = {29240684}
}

@misc{powelEmpiricalObservabilityGramian2020,
  title = {Empirical {{Observability Gramian}} for {{Stochastic Observability}} of {{Nonlinear Systems}}},
  author = {Powel, Nathan and Morgansen, Kristi A.},
  year = 2020,
  month = jun,
  number = {arXiv:2006.07451},
  eprint = {2006.07451},
  primaryclass = {eess},
  publisher = {arXiv},
  doi = {10.48550/arXiv.2006.07451},
  urldate = {2026-02-06},
  archiveprefix = {arXiv}
}

@article{kennedyBayesianCalibrationComputer2001,
  title = {Bayesian Calibration of Computer Models},
  author = {Kennedy, Marc C. and O'Hagan, Anthony},
  year = {2001},
  journal = {Journal of the Royal Statistical Society: Series B (Statistical Methodology)},
  volume = {63},
  number = {3},
  pages = {425--464},
  issn = {1369-7412},
  doi = {10.1111/1467-9868.00294},
  langid = {english}
}

@article{alexanderianFastScalableMethod2016,
  title = {A {{Fast}} and {{Scalable Method}} for {{A-Optimal Design}} of {{Experiments}} for {{Infinite-dimensional Bayesian Nonlinear Inverse Problems}}},
  author = {Alexanderian, Alen and Petra, Noemi and Stadler, Georg and Ghattas, Omar},
  year = 2016,
  month = jan,
  journal = {SIAM Journal on Scientific Computing},
  volume = {38},
  number = {1},
  pages = {A243-A272},
  publisher = {{Society for Industrial and Applied Mathematics}},
  issn = {1064-8275},
  doi = {10.1137/140992564},
  urldate = {2026-05-01},
}

@article{boyaciogluDualityStochasticObservability2024,
  title = {Duality of Stochastic Observability and Constructability and Links to Fisher Information},
  author = {Boyac{\i}o{\u g}lu, Burak and {van Breugel}, Floris},
  year = 2024,
  journal = {IEEE Control Systems Letters},
  volume = {8},
  pages = {3458--3463},
  issn = {2475-1456},
  doi = {10.1109/LCSYS.2025.3547297},
  urldate = {2026-05-18},
  copyright = {https://ieeexplore.ieee.org/Xplorehelp/downloads/license-information/IEEE.html}
}

@article{brynjarsdottirLearningPhysicalParameters2014,
  title = {Learning about Physical Parameters: The Importance of Model Discrepancy},
  shorttitle = {Learning about Physical Parameters},
  author = {Brynjarsd{\'o}ttir, Jenn{\'y} and O'Hagan, Anthony},
  year = 2014,
  month = oct,
  journal = {Inverse Problems},
  volume = {30},
  number = {11},
  pages = {114007},
  publisher = {IOP Publishing},
  issn = {0266-5611},
  doi = {10.1088/0266-5611/30/11/114007},
  urldate = {2022-10-01},
  langid = {english},
}

@article{cuiPriorNormalizationCertified2022a,
  title = {Prior Normalization for Certified Likelihood-Informed Subspace Detection of {{Bayesian}} Inverse Problems},
  author = {Cui, Tiangang and Tong, Xin T. and Zahm, Olivier},
  year = 2022,
  month = oct,
  journal = {Inverse Problems},
  volume = {38},
  number = {12},
  pages = {124002},
  publisher = {IOP Publishing},
  issn = {0266-5611},
  doi = {10.1088/1361-6420/ac9582},
  urldate = {2026-05-01},
  langid = {english}
}

@incollection{dashtiBayesianApproachInverse2017a,
  title = {The {{Bayesian Approach}} to {{Inverse Problems}}},
  booktitle = {Handbook of {{Uncertainty Quantification}}},
  author = {Dashti, Masoumeh and Stuart, Andrew M.},
  year = 2017,
  pages = {311--428},
  publisher = {Springer, Cham},
  address = {Cham},
  doi = {10.1007/978-3-319-12385-1_7},
  urldate = {2026-05-01},
  isbn = {978-3-319-12385-1},
  langid = {english}
}

@article{huberRobustEstimationLocation1964a,
  title = {Robust {{Estimation}} of a {{Location Parameter}}},
  author = {Huber, Peter J.},
  year = 1964,
  month = mar,
  journal = {The Annals of Mathematical Statistics},
  volume = {35},
  number = {1},
  pages = {73--101},
  publisher = {Institute of Mathematical Statistics},
  issn = {0003-4851, 2168-8990},
  doi = {10.1214/aoms/1177703732},
  urldate = {2026-05-01},
  langid = {english},
}

@book{huberRobustStatistics2009,
  title = {Robust Statistics},
  author = {Huber, Peter J. and Ronchetti, Elvezio},
  year = 2009,
  series = {Wiley Series in Probability and Statistics},
  edition = {2nd ed},
  publisher = {Wiley},
  address = {Hoboken, N.J},
  doi = {10.1002/9780470434697},
  isbn = {978-0-470-12990-6},
  langid = {english},
  lccn = {QA276 .H785 2009},
}

@article{jauffretObservabilityFisherInformation2007a,
  title = {Observability and Fisher Information Matrix in Nonlinear Regression},
  author = {Jauffret, Claude},
  year = 2007,
  month = apr,
  journal = {IEEE Transactions on Aerospace and Electronic Systems},
  volume = {43},
  number = {2},
  pages = {756--759},
  issn = {1557-9603},
  doi = {10.1109/TAES.2007.4285368},
  urldate = {2026-05-01},
}

@article{lingSelectionModelDiscrepancy2014a,
  title = {Selection of Model Discrepancy Priors in {{Bayesian}} Calibration},
  author = {Ling, You and Mullins, Joshua and Mahadevan, Sankaran},
  year = 2014,
  month = nov,
  journal = {Journal of Computational Physics},
  volume = {276},
  pages = {665--680},
  issn = {0021-9991},
  doi = {10.1016/j.jcp.2014.08.005},
  urldate = {2026-05-01},
}

@article{petraComputationalFrameworkInfiniteDimensional2014,
  title = {A {{Computational Framework}} for {{Infinite-Dimensional Bayesian Inverse Problems}}, {{Part II}}: {{Stochastic Newton MCMC}} with {{Application}} to {{Ice Sheet Flow Inverse Problems}}},
  shorttitle = {A {{Computational Framework}} for {{Infinite-Dimensional Bayesian Inverse Problems}}, {{Part II}}},
  author = {Petra, Noemi and Martin, James and Stadler, Georg and Ghattas, Omar},
  year = 2014,
  month = jan,
  journal = {SIAM Journal on Scientific Computing},
  volume = {36},
  number = {4},
  pages = {A1525-A1555},
  publisher = {{Society for Industrial and Applied Mathematics}},
  issn = {1064-8275},
  doi = {10.1137/130934805},
  urldate = {2026-05-01},
}

@article{qianModelReductionLinear2022,
  title = {Model {{Reduction}} of {{Linear Dynamical Systems}} via {{Balancing}} for {{Bayesian Inference}}},
  author = {Qian, Elizabeth and Tabeart, Jemima M. and Beattie, Christopher and Gugercin, Serkan and Jiang, Jiahua and Kramer, Peter R. and Narayan, Akil},
  year = 2022,
  month = mar,
  journal = {Journal of Scientific Computing},
  volume = {91},
  number = {1},
  pages = {29},
  issn = {1573-7691},
  doi = {10.1007/s10915-022-01798-8},
  urldate = {2026-05-01},
  langid = {english}
}

@article{ryanReviewModernComputational2016,
  title = {A Review of Modern Computational Algorithms for Bayesian Optimal Design},
  author = {Ryan, Elizabeth G. and Drovandi, Christopher C. and McGree, James M. and Pettitt, Anthony N.},
  year = 2016,
  journal = {International Statistical Review},
  volume = {84},
  number = {1},
  pages = {128--154},
  doi = {10.1111/insr.12107}
}

@article{bagheriIdentificationFlexuralRigidity2018,
  title = {Identification of Flexural Rigidity in Bridges with Limited Structural Information},
  author = {Bagheri, Abdollah and Alipour, Mohamad and Ozbulut, Osman E. and Harris, Devin K.},
  year = {2018},
  journal = {Journal of Structural Engineering},
  volume = {144},
  number = {8},
  pages = {04018126},
  doi = {10.1061/(ASCE)ST.1943-541X.0002131}
}

@article{leeFrequencydomainMethodStructural2002,
  title = {A Frequency-Domain Method of Structural Damage Identification Formulated from the Dynamic Stiffness Equation of Motion},
  author = {Lee, U. and Shin, J.},
  year = {2002},
  journal = {Journal of Sound and Vibration},
  volume = {257},
  number = {4},
  pages = {615--634},
  doi = {10.1006/jsvi.2002.5058}
}

@article{panResonancebasedApproachSection2021,
  title = {Resonance-Based Approach for Section Flexural Rigidity Identification of Simply Supported Beams},
  author = {Pan, Danguang and Feng, Zhiyao and Lu, Pan and Zheng, Zijian and Zhao, Bincheng},
  year = {2021},
  journal = {Engineering Structures},
  volume = {236},
  pages = {112070},
  doi = {10.1016/j.engstruct.2021.112070}
}

@article{adhikariDistributedParameterModel2010,
  title = {Distributed Parameter Model Updating Using the {{Karhunen}}--{{Lo\`eve}} Expansion},
  author = {Adhikari, S. and Friswell, M. I.},
  year = {2010},
  journal = {Mechanical Systems and Signal Processing},
  volume = {24},
  number = {2},
  pages = {326--339},
  doi = {10.1016/j.ymssp.2009.08.007}
}

@article{bonkowskiStiffnessIdentificationReinforced2024,
  title = {Stiffness Identification of Reinforced Concrete Beams Using Rotation Rate Sensors},
  author = {Bo{\'n}kowski, Piotr Adam and Bobra, Piotr and Zembaty, Zbigniew and J{\c e}draszak, Bronis{\l}aw},
  year = {2024},
  journal = {Engineering Structures},
  volume = {307},
  pages = {117969},
  doi = {10.1016/j.engstruct.2024.117969}
}

@article{kooSubstructuralIdentificationFlexural2015,
  title = {Substructural Identification of Flexural Rigidity for Beam-Like Structures},
  author = {Koo, Ki-Young and Yi, Jin-Hak},
  year = {2015},
  journal = {Shock and Vibration},
  volume = {2015},
  pages = {1--15},
  doi = {10.1155/2015/726410}
}

@article{attiaGoalorientedOptimalDesign2018,
  title = {Goal-Oriented Optimal Design of Experiments for Large-Scale {{Bayesian}} Linear Inverse Problems},
  author = {Attia, Ahmed and Alexanderian, Alen and Saibaba, Arvind K.},
  year = 2018,
  month = jul,
  journal = {Inverse Problems},
  volume = {34},
  number = {9},
  publisher = {IOPscience},
  issn = {0266-5611},
  doi = {10.1088/1361-6420/aad210},
  urldate = {2026-05-22},
  langid = {english},
}

@article{attiaOptimalExperimentalDesign2022,
  title = {Optimal {{Experimental Design}} for {{Inverse Problems}} in the {{Presence}} of {{Observation Correlations}}},
  author = {Attia, Ahmed and Constantinescu, Emil},
  year = 2022,
  month = aug,
  journal = {SIAM Journal on Scientific Computing},
  volume = {44},
  publisher = {Argonne National Laboratory (ANL)},
  issn = {1064-8275},
  doi = {10.1137/21M1418666},
  urldate = {2026-05-22},
  langid = {english}
}

@article{barthorpeEmergingTrendsOptimal2020,
  title = {Emerging {{Trends}} in {{Optimal Structural Health Monitoring System Design}}: {{From Sensor Placement}} to {{System Evaluation}}},
  shorttitle = {Emerging {{Trends}} in {{Optimal Structural Health Monitoring System Design}}},
  author = {Barthorpe, Robert James and Worden, Keith},
  year = 2020,
  month = sep,
  journal = {Journal of Sensor and Actuator Networks},
  volume = {9},
  number = {3},
  pages = {31},
  publisher = {Multidisciplinary Digital Publishing Institute},
  issn = {2224-2708},
  doi = {10.3390/jsan9030031},
  urldate = {2026-05-22},
  copyright = {http://creativecommons.org/licenses/by/3.0/},
  langid = {english},
}

@article{bellMultiresponseParameterEstimation2007,
  title = {Multiresponse {{Parameter Estimation}} for {{Finite-Element Model Updating Using Nondestructive Test Data}}},
  author = {Bell, Erin Santini and Sanayei, Masoud and Javdekar, Chitra N. and Slavsky, Eugene},
  year = 2007,
  month = aug,
  journal = {Journal of Structural Engineering},
  volume = {133},
  number = {8},
  pages = {1067--1079},
  publisher = {American Society of Civil Engineers},
  issn = {0733-9445},
  doi = {10.1061/(ASCE)0733-9445(2007)133:8(1067)},
  urldate = {2026-05-22},
  langid = {english},
}

@article{ebrahimianInformationTheoreticApproachIdentifiability2019,
  title = {Information-{{Theoretic Approach}} for {{Identifiability Assessment}} of {{Nonlinear Structural Finite-Element Models}}},
  author = {Ebrahimian, Hamed and Astroza, Rodrigo and Conte, Joel P. and Bitmead, Robert R.},
  year = 2019,
  month = jul,
  journal = {Journal of Engineering Mechanics},
  volume = {145},
  number = {7},
  pages = {04019039},
  publisher = {American Society of Civil Engineers},
  issn = {0733-9399},
  doi = {10.1061/(ASCE)EM.1943-7889.0001590},
  urldate = {2026-05-22},
  langid = {english},
}

@article{ercanBayesianOptimalSensor2023,
  title = {Bayesian Optimal Sensor Placement for Parameter Estimation under Modeling and Input Uncertainties},
  author = {Ercan, Tulay and Papadimitriou, Costas},
  year = 2023,
  month = oct,
  journal = {Journal of Sound and Vibration},
  volume = {563},
  pages = {117844},
  issn = {0022-460X},
  doi = {10.1016/j.jsv.2023.117844},
  urldate = {2026-05-22},
}

@article{huanOptimalExperimentalDesign2024,
  title = {Optimal Experimental Design: {{Formulations}} and Computations},
  shorttitle = {Optimal Experimental Design},
  author = {Huan, Xun and Jagalur, Jayanth and Marzouk, Youssef},
  year = 2024,
  month = jul,
  journal = {Acta Numerica},
  volume = {33},
  pages = {715--840},
  issn = {0962-4929, 1474-0508},
  doi = {10.1017/S0962492924000023},
  urldate = {2026-05-22},
  langid = {english},
}

@article{huanSimulationbasedOptimalBayesian2013,
  title = {Simulation-Based Optimal {{Bayesian}} Experimental Design for Nonlinear Systems},
  author = {Huan, Xun and Marzouk, Youssef M.},
  year = 2013,
  month = jan,
  journal = {Journal of Computational Physics},
  volume = {232},
  number = {1},
  pages = {288--317},
  issn = {00219991},
  doi = {10.1016/j.jcp.2012.08.013},
  urldate = {2026-05-22},
  copyright = {https://www.elsevier.com/tdm/userlicense/1.0/},
  langid = {english},
}

@article{kimSequentialFrameworkImproving2019,
  title = {A {{Sequential Framework}} for {{Improving Identifiability}} of {{FE Model Updating Using Static}} and {{Dynamic Data}}},
  author = {Kim, Sehoon and Kim, Namgyu and Park, Young-Soo and Jin, Seung-Seop},
  year = 2019,
  month = jan,
  journal = {Sensors},
  volume = {19},
  number = {23},
  pages = {5099},
  publisher = {Multidisciplinary Digital Publishing Institute},
  issn = {1424-8220},
  doi = {10.3390/s19235099},
  urldate = {2026-05-22},
  copyright = {http://creativecommons.org/licenses/by/3.0/},
  langid = {english},
}

@article{kunwooObservabilityGramianBayesian2023,
  title = {Observability {{Gramian}} for {{Bayesian Inference}} in {{Nonlinear Systems With Its Industrial Application}}},
  author = {Kunwoo, Lee and Umezu, Yusuke and Konno, Kaiki and Kashima, Kenji},
  year = 2023,
  journal = {IEEE Control Systems Letters},
  volume = {7},
  pages = {871--876},
  issn = {2475-1456},
  doi = {10.1109/LCSYS.2022.3227452},
  urldate = {2026-05-22},
}

@article{lallEmpiricalModelReduction1999,
  title = {Empirical Model Reduction of Controlled Nonlinear Systems},
  author = {Lall, Sanjay and Marsden, Jerrold E. and Glava{\v s}ki, Sonja},
  year = 1999,
  month = jul,
  journal = {IFAC Proceedings Volumes},
  series = {14th {{IFAC World Congress}} 1999, {{Beijing}}, {{Chia}}, 5-9 {{July}}},
  volume = {32},
  number = {2},
  pages = {2598--2603},
  publisher = {Elsevier},
  issn = {1474-6670},
  doi = {10.1016/S1474-6670(17)56442-3},
  urldate = {2026-04-10},
}

@misc{menkeResolutionCovarianceGLS2014,
  title = {Resolution and Covariance in Generalized Least Squares Inversion},
  author = {Menke, William},
  year = 2014,
  urldate = {2026-05-22},
}

@inproceedings{sanayeiSensorPlacementParameter2002,
  title = {Sensor {{Placement}} for {{Parameter Estimation}} of {{Structures Using Fisher Information Matrix}}},
  booktitle = {Applications of {{Advanced Technologies}} in {{Transportation}} (2002)},
  author = {Sanayei, Masoud and Javdekar, Chitra N.},
  year = 2002,
  month = jul,
  pages = {385--386},
  publisher = {American Society of Civil Engineers},
  address = {Boston Marriot, Cambridge, Massachusetts, United States},
  doi = {10.1061/40632(245)49},
  urldate = {2026-05-22},
  isbn = {978-0-7844-0632-8},
  langid = {english},
}

@article{scherpenBalancingNonlinearSystems1993,
  title = {Balancing for Nonlinear Systems},
  author = {Scherpen, J. M. A.},
  year = 1993,
  month = aug,
  journal = {Systems \& Control Letters},
  volume = {21},
  number = {2},
  pages = {143--153},
  issn = {0167-6911},
  doi = {10.1016/0167-6911(93)90117-O},
  urldate = {2026-05-22},
}

@article{vandewouwerApproachSelectionOptimal2000,
  title = {An Approach to the Selection of Optimal Sensor Locations in Distributed Parameter Systems},
  author = {Vande Wouwer, Alain and Point, Nicolas and Porteman, St{\'e}phanie and Remy, Marcel},
  year = 2000,
  month = aug,
  journal = {Journal of Process Control},
  volume = {10},
  number = {4},
  pages = {291--300},
  issn = {0959-1524},
  doi = {10.1016/S0959-1524(99)00048-7},
  urldate = {2026-05-22},
}

@article{yiMethodologyDevelopmentsSensor2012,
  title = {Methodology {{Developments}} in {{Sensor Placement}} for {{Health Monitoring}} of {{Civil Infrastructures}}},
  author = {Yi, Ting-Hua and Li, Hong-Nan},
  year = 2012,
  month = aug,
  journal = {International Journal of Distributed Sensor Networks},
  volume = {8},
  number = {8},
  pages = {612726},
  publisher = {SAGE Publications},
  issn = {1550-1329},
  doi = {10.1155/2012/612726},
  urldate = {2026-05-22},
  langid = {english},
}

@article{ashyraliyevSystemsBiologyParameter2009,
  title = {Systems Biology: Parameter Estimation for Biochemical Models},
  shorttitle = {Systems Biology},
  author = {Ashyraliyev, Maksat and {Fomekong-Nanfack}, Yves and Kaandorp, Jaap A. and Blom, Joke G.},
  year = 2009,
  journal = {The FEBS Journal},
  volume = {276},
  number = {4},
  pages = {886--902},
  issn = {1742-4658},
  doi = {10.1111/j.1742-4658.2008.06844.x},
  urldate = {2026-05-23},
  copyright = {\copyright{} 2009 The Authors Journal compilation \copyright{} 2009 FEBS},
  langid = {english},
}

@article{qiOptimalExperimentalDesign2026,
  title = {Optimal Experimental Design for Parameter Estimation in the Presence of Observation Noise},
  author = {Qi, Jie and Baker, Ruth E.},
  year = 2026,
  month = feb,
  journal = {Mathematical Biosciences},
  volume = {392},
  pages = {109571},
  issn = {0025-5564},
  doi = {10.1016/j.mbs.2025.109571},
  urldate = {2026-05-23},
}

@article{airaudoAdjointbasedDeterminationWeaknesses2023,
  title = {Adjoint-Based Determination of Weaknesses in Structures},
  author = {Airaudo, Facundo N. and L{\"o}hner, Rainald and W{\"u}chner, Roland and Antil, Harbir},
  year = 2023,
  month = dec,
  journal = {Computer Methods in Applied Mechanics and Engineering},
  volume = {417},
  pages = {116471},
  issn = {00457825},
  doi = {10.1016/j.cma.2023.116471},
  urldate = {2026-05-24},
  langid = {english},
}

@misc{scheffelsLikelihoodinformedModelReduction2025,
  title = {Likelihood-Informed {{Model Reduction}} for {{Bayesian Inference}} of {{Static Structural Loads}}},
  author = {Scheffels, Jakob and Qian, Elizabeth and Papaioannou, Iason and Ullmann, Elisabeth},
  year = 2025,
  month = oct,
  number = {arXiv:2510.07950},
  eprint = {2510.07950},
  primaryclass = {math.NA},
  publisher = {arXiv},
  doi = {10.48550/arXiv.2510.07950},
  urldate = {2026-05-24},
  archiveprefix = {arXiv},
}
\end{document}